\definecolor{myurlcolor}{rgb}{0,0,0.9}
\newcommand{\proj}[1]{| #1\rangle\!\langle #1 |}
\newcommand{\inner}[2]{\langle #1 , #2\rangle}
\DeclareMathOperator{\trace}{Tr}
\newcommand{\Ptr}[2]{\trace_{#1}\Pa{#2}}
\newcommand{\Tr}[1]{\Ptr{}{#1}}
\newcommand{\Pa}[1]{\left[#1\right]}
\theoremstyle{plain}
\newtheorem{thm}{Theorem}
\newtheorem{prop}[thm]{Proposition}
\newtheorem{Exam}[thm]{Example}
  {\begin{Exam}\upshape}{\end{Exam}}
\newcommand*{\myproofname}{Proof}
\def\ot{\otimes}
\def\complex{\mathbb{C}}
\def\real{\mathbb{R}}
\newcommand{\be}{\begin{equation}}
\newcommand{\ee}{\end{equation}}
\newcommand{\beq}{\begin{eqnarray}}
\newcommand{\eeq}{\end{eqnarray}}
\DeclareMathAlphabet{\mathcal}{OMS}{cmsy}{m}{n}
\begin{document}
\title{Extremality of stabilizer states}
\author{Kaifeng Bu}
\email{kfbu@fas.harvard.edu}
\affiliation{\it Department of Physics, Harvard University, Cambridge, MA 02138, USA
}

\begin{abstract}

We investigate the extremality of stabilizer states to reveal their exceptional role in the space of all 
$n$-qubit/qudit states. 
We establish uncertainty principles for the characteristic function and the Wigner function of states, respectively. We find that only stabilizer states achieve saturation
in these principles.
Furthermore, we prove a general theorem that stabilizer states are extremal for convex information measures invariant under local unitaries. 
We explore this extremality in the context of various quantum information and correlation measures, including entanglement entropy, conditional entropy and 
other entanglement measures.
Additionally, leveraging the recent discovery that stabilizer states are the limit states under quantum convolution,  we 
establish the monotonicity of the entanglement entropy and conditional entropy under quantum convolution.
These results highlight the remarkable information-theoretic properties of stabilizer states. Their extremality provides valuable insights into their ability to capture information content and correlations, paving the way for further exploration of their potential in quantum information processing.

\end{abstract}

\maketitle

\newpage

\section{Introduction}
Stabilizer states stand as a cornerstone, offering a rich theoretical framework and practical utility across a variety of quantum computing models and error correction schemes. Originating from the study of quantum error-correcting codes \cite{Gottesman97}, stabilizer states are defined through the stabilizer formalism, which utilizes a commuting group of Pauli operators to uniquely determine a quantum state. 
This formalism not only simplifies the representation of quantum states but also facilitates efficient simulation of quantum circuits under certain conditions.

Stabilizer states encompass a broad class of quantum states, including the Bell state and GHZ state, which play a
fundamental role in quantum entanglement theory. 
 The nice mathematical structure of stabilizer states allows for their elegant description and manipulation. This makes them invaluable in the development of quantum error correction, where they form the basis of stabilizer codes—a class of codes capable of protecting quantum information from errors. Shor's  9-qubit code~\cite{ShorPRA95} and Kitaev's toric code~\cite{Kitaev_toric} are two well-known examples of stabilizer codes.
Recently, stabilizer codes have been successfully realized in experiments~\cite{Google2023suppressing,Bluvstein2024logical}, marking a significant milestone towards achieving fault-tolerant quantum computation.
 
Furthermore, the stabilizer formalism underpins the Gottesman-Knill theorem~\cite{gottesman1998heisenberg}, which posits that any quantum circuit composed solely of
stabilizer input states, Clifford gates (which map stabilizer states to stabilizer states) and Pauli measurement,  can be simulated in polynomial time on a classical computer. This theorem not only delineates the boundary between classical and quantum computational power but also emphasizes the importance of non-Clifford gates in achieving quantum computational advantage. Later, the
extension of the Gottesman-Knill theorem beyond stabilizer circuits was studied to further understand the boundary between classical and quantum computation~\cite{BravyiPRL16,BravyiPRX16,bravyi2019simulation,BeverlandQST20,SeddonPRXQ21, bu2022classical,gao2018efficient,Bu19,UmeshSTOC23,koh2015further}.

In this work, we focus on the extremality of stabilizer states.
The concept of extremality has been well-studied within the context of Gaussian states in bosonic systems~\cite{Holevo99,HolevoMutual99,CerfPRL04,Eisert2005gaussian,WolfPRL06}.
A prominent example is the maximum entropy principle: for states with a fixed covariance matrix, Gaussian states achieve the maximum quantum entropy~\cite{Holevo99}. 
Similar extremality properties have been shown for other information and correlation measures, including the mutual information~\cite{HolevoMutual99}, conditional entropy~\cite{CerfPRL04,Eisert2005gaussian} and entanglement measure~\cite{WolfPRL06}.
 However, little is known about the extremality of stabilizer states. 

We investigate the extremality of stabilizer states by starting from some uncertainty principles.
We establish uncertainty principles 
specifically tailored to stabilizer states. 
These principles connect the max-entropy of a state with the characteristic function and the Wigner function of the 
states.  Interestingly, the derived inequalities are saturated only by stabilizer states.

Moreover, we present a general theorem that guarantees that stabilizer states are extremal for any convex information measure invariant under local unitary operations. 
This extremality can be applied to various information and correlation measures, including entanglement entropy, conditional entropy, and convex entanglement measures. We explore how these measures behave for stabilizer states compared to their non-stabilizer counterparts, revealing their ability to capture quantum information and correlations.

Furthermore, as the stabilizer states are the limit states in the quantum central limit theorem \cite{BGJ23a,BGJ23b, BGJ24a}, 
we study the behavior of the information measures under quantum convolution. We show the monotonicity of the entanglement entropy and conditional entropy 
 under quantum convolution.

\section{Preliminary}
We focus on an $n$-qudit system with Hilbert space $\mathcal{H}^{\ot n}$. Here $\mathcal{H} \simeq \complex^d$ is  $d$-dimensional, and $d$ could be any integer number greater than or equal to $2$. Let $D(\mathcal{H}^{\ot n})$ denote the set of  all quantum states on $\mathcal{H}^{\ot n}$.
We consider the orthonormal, computational basis in $\mathcal{H}$ denoted by $\set{\ket{k}}_{k\in \mathbb{Z}_d}$. The Pauli $X$ and $Z$ operators
are 
$X: |k\rangle\mapsto |k+1 \rangle,  Z: |k\rangle \mapsto\omega^k_d\,|k\rangle,\;\forall k\in \mathbb{Z}_d\;. 
$
Here  $\mathbb{Z}_{d}$ is the cyclic group over $d$,  and $\omega_d=\exp(2\pi i /d)$ is a $d$-th root of unity. 
If  $d$ is an odd prime number,  the local Pauli operators
are defined as 
$
w(p,q)=\omega^{-2^{-1}pq}_d\, Z^pX^q
$. Here $2^{-1}$ denotes the inverse $\frac{d+1}{2}$ of 2 in $\mathbb{Z}_d$.
If $d=2$, the Pauli operators are defined as 
$
w(p,q)=i^{-pq}Z^pX^q
$. Pauli operators for general local dimension $d$ can be found in~\cite{Jaffe17}.  In the $n$-qudit system, the $n$-qudit Pauli  operators are defined as
$
w(\vec p, \vec q)
=w(p_1, q_1)\ot...\ot w(p_n, q_n)$,
with $\vec p=(p_1, p_2,..., p_n)\in \mathbb{Z}^n_d$, $\vec q=(q_1,..., q_n)\in \mathbb{Z}^n_d $.

Denote $V^n:=\mathbb{Z}^n_d\times \mathbb{Z}^n_d$; this represents the phase space for $n$-qudit systems~\cite{Gross06}. The set of Pauli operators forms an orthonormal basis 
 with respect to the inner product 
$\inner{A}{B}=\frac{1}{d^n}\Tr{A^\dag B}$. 
The characteristic function $\Xi_{\rho}:V^{n}\to\complex$ of a quantum state $\rho$ is
\begin{eqnarray*}
\Xi_{\rho}(\vec{p},\vec q):=\Tr{\rho w(\vec{p},\vec q)^\dag}.
\end{eqnarray*}
The support of $\Xi_{\rho}$ is given by
$\text{Supp}(\Xi_{\rho})=\set{\vec x:\Xi_{\rho}(\vec x)\neq 0}$.

Any quantum state $\rho$ can be  expressed as a linear combination of the Pauli operators 
$
\rho=\frac{1}{d^n}
\sum_{(\vec{p},\vec q)\in V^n}
\Xi_{\rho}(\vec{p},\vec q)w(\vec{p},\vec q)\;.
$
The Pauli operators can serve as a Fourier basis, which has
 found extensive uses in various  applications, including quantum Boolean functions~\cite{montanaro2010quantum}, quantum circuit complexity~\cite{Bucomplexity22}, quantum scrambling~\cite{GBJPNAS23}, the generalization capacity 
 of quantum machine learning~\cite{BuPRA19_stat}, and
  quantum state 
 tomography~\cite{Bunpj22}.

In this work, we focus on stabilizer states.
A pure stabilizer state $\ket{\psi}$ is identified as  the common eigenstate of a commuting  subgroup of the 
Pauli operators with  $n$ generators $ \set{g_i}_{i\in [n]}$, i.e., $g_i\ket{\psi}=\ket{\psi}$ for each $i$. 
The corresponding density matrix can be expressed 
as $\proj{\psi}=
\Pi^n_{i=1}\mathbb{E}_{k_i\in \mathbb{Z}_d}g^{k_i}_i$, where the expectation $\mathbb{E}_{k_i\in \mathbb{Z}_d}g^{k_i}_i:=\frac{1}{d}\sum_{k_i\in \mathbb{Z}_d}g^{k_i}_i$.
In general, a mixed state $\rho$ is called a stabilizer state if 
there exists some commuting  subgroup of the 
Pauli operators  with $r\leq n$ generators $ \set{g_i}_{i\in [r]}$ such that $\rho=\frac{1}{d^{n-r}}
\Pi^r_{i=1}\mathbb{E}_{k_i\in \mathbb{Z}_d}g^{k_i}_i$. We take STAB to denote the set of all stabilizer states.
Note that some literature considers states that can be written as any convex combination of stabilizers. 
However, such states fall outside the scope of this work as they generally lose the stabilizer formalism.

 Now, given a quantum state $\rho$,  let us consider the set of Pauli operators $\set{w(\vec x): |\Xi_{\rho}(\vec x)|=1,\vec x\in V^n}$, denoted as $G_{\rho}$. It was shown that the 
 $G_{\rho}$ is a commuting subgroup of Pauli operators (See lemma 12 in \cite{BGJ23b}). 
 Hence, we call $G_{\rho}$ the stabilizer group of $\rho$.
 The stabilizer group is closely related to the concept of mean states~\cite{BGJ23a,BGJ23b}. For an $n$-qudit state $\rho$, its mean state $\mathcal{M}(\rho)$ is defined by the characteristic function as follows:
\begin{align}\label{0109shi6}
\Xi_{\mathcal{M}(\rho)}(\vec x) :=
\left\{
\begin{aligned}
&\Xi_\rho ( \vec x) , && |\Xi_\rho ( \vec x)|=1,\\
& 0 , && |\Xi_\rho (  \vec x)|<1.
\end{aligned}
\right.
\end{align}
It was shown that  $\mathcal{M}(\rho)$  is a stabilizer state \cite{BGJ23a,BGJ23b}.
Hence $\mathcal{M}(\rho)$ is the stabilizer state with the same stabilizer group as $\rho$.
Furthermore, a state $\rho$ belongs to the set of stabilizer states (STAB)  iff $\rho=\mathcal{M}(\rho)$.

\section{Main results}
We start by introducing uncertainty principles for quantum states using their characteristic and Wigner functions. For the sake of clarity, all proofs  are provided in the Appendix.
 
 {\bf \textit{Extremality of stabilizer states in uncertainty principles.---}}Given a quantum state $\rho$, 
 we can consider its rank, denoted by $\text{Rank}(\rho)$, as the number of nonzero eigenvalues. The logarithm of the rank is referred to as the max-entropy~\cite{Renner05}, denoted by  $S_{\max}(\rho)$. Another important concept is the Pauli rank, denoted by $\chi_P(\rho)$ \cite{Bu19}.
 This refers to the nonzero coefficients in the Pauli decomposition, i.e., the size of the support of its characteristic function.
 We now explore the relationship between max-entropy $S_{\max}(\rho)$ and the Pauli rank 
$\chi_P(\rho)$.

\begin{thm}[\bf Uncertainty principle for Pauli rank ]\label{thm:UP}
    Given an $n$-qudit state $\rho$, we have
    \begin{eqnarray}
S_{\max}(\rho)+\log \chi_P(\rho)
        \geq n\log d,
    \end{eqnarray}
    and the equality holds iff $\rho$ is a stabilizer state, i.e.,  $\rho=\mathcal{M}(\rho)$.
\end{thm}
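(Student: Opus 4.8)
The plan is to rewrite the bound in multiplicative form as $\mathrm{Rank}(\rho)\cdot\chi_P(\rho)\ge d^n$ and obtain it by sandwiching the purity $\Tr{\rho^2}$ between two quantities. On one side, if $\rho$ has rank $r=\mathrm{Rank}(\rho)$, then Cauchy--Schwarz applied to the $r$ nonzero eigenvalues (equivalently, the uniform distribution minimizes the collision probability of a distribution supported on $r$ points) gives $\Tr{\rho^2}\ge (\Tr{\rho})^2/r = 1/\mathrm{Rank}(\rho)$. On the other side, since $\set{w(\vec x)}_{\vec x\in V^n}$ is an orthonormal basis for $\inner{A}{B}=\frac1{d^n}\Tr{A^\dag B}$, Parseval's identity gives $\Tr{\rho^2}=\frac1{d^n}\sum_{\vec x\in V^n}\abs{\Xi_\rho(\vec x)}^2$; and because $w(\vec x)$ is unitary and $\rho$ is a state, $\abs{\Xi_\rho(\vec x)}=\abs{\Tr{\rho\,w(\vec x)^\dag}}\le 1$, so each term of the sum over $\mathrm{Supp}(\Xi_\rho)$ is at most $1$ and hence $\Tr{\rho^2}\le \chi_P(\rho)/d^n$. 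Chaining the two bounds yields $1/\mathrm{Rank}(\rho)\le \Tr{\rho^2}\le \chi_P(\rho)/d^n$, and taking logarithms produces $S_{\max}(\rho)+\log\chi_P(\rho)\ge n\log d$.

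For the equality case, observe that $\mathrm{Rank}(\rho)\,\chi_P(\rho)=d^n$ squeezes $\Tr{\rho^2}$ between two equal numbers, so both of the inequalities above must be tight. Tightness of $\Tr{\rho^2}\ge 1/\mathrm{Rank}(\rho)$ forces $\rho$ to be proportional to a projector, and tightness of $\Tr{\rho^2}\le \chi_P(\rho)/d^n$ forces $\abs{\Xi_\rho(\vec x)}\in\set{0,1}$ for every $\vec x\in V^n$. By the definition of the mean state in \eqref{0109shi6} the latter says exactly $\Xi_\rho=\Xi_{\mathcal M(\rho)}$, i.e. $\rho=\mathcal M(\rho)$, which by the characterization recalled in the Preliminary (together with Lemma 12 of \cite{BGJ23b}, that $G_\rho$ is a commuting group) is equivalent to $\rho\in\mathrm{STAB}$. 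Hence equality implies that $\rho$ is a stabilizer state.

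Conversely, I would verify directly that every stabilizer state saturates the bound. For $\rho=\frac1{d^{n-r}}\prod_{i=1}^r\mathbb E_{k_i\in\mathbb Z_d}g_i^{k_i}$ with abelian stabilizer group $S=\langle g_1,\dots,g_r\rangle$ of order $d^r$, one has $\rho=\frac1{d^n}\sum_{g\in S}g$; the operator $\frac1{d^r}\sum_{g\in S}g$ is a projector whose trace is $\frac1{d^r}\Tr{\sum_{g\in S}g}=d^{n-r}$, so $\mathrm{Rank}(\rho)=d^{n-r}$, while the characteristic function of $\rho$ is supported precisely on the $d^r$ phase-space points indexing $S$, each with modulus $1$, so $\chi_P(\rho)=d^r$. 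Therefore $\mathrm{Rank}(\rho)\,\chi_P(\rho)=d^n$, which completes the equivalence.

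The inequality itself is essentially immediate once the purity is inserted as an intermediary, so the real content lies in the equality analysis. I expect the main (still routine) obstacle to be the converse computation for a general \emph{mixed} stabilizer state, namely identifying $\frac1{d^r}\sum_{g\in S}g$ with a rank-$d^{n-r}$ projector and checking that its nonzero Pauli coefficients are exactly the $d^r$ group elements, each of unit modulus; the translation of the analytic saturation condition into the structural statement $\rho\in\mathrm{STAB}$ is then handled by the already-available facts that $G_\rho$ is a commuting group and that $\rho\in\mathrm{STAB}\iff\rho=\mathcal M(\rho)$.
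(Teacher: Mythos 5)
Your proof is correct and follows essentially the same route as the paper's: sandwich the purity $\Tr{\rho^2}$ between $1/\mathrm{Rank}(\rho)$ (Cauchy--Schwarz on the eigenvalues) and $\chi_P(\rho)/d^n$ (Parseval plus $|\Xi_\rho|\le 1$), then read off the equality condition $|\Xi_\rho(\vec x)|\in\{0,1\}$ as $\rho=\mathcal{M}(\rho)$. Your explicit converse computation for a mixed stabilizer state is more detailed than the paper's "straightforward to verify," but the argument is the same.
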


This uncertainty principle leverages the fact that the characteristic function can be viewed as a set of quantum Fourier coefficients. It reveals that stabilizer states are the only quantum states that achieve the equality in this specific uncertainty relation.
It is worth noting that in the Heisenberg uncertainty principle~\cite{Heisenberg1927anschaulichen}, only Gaussian wave packets can achieve equality (see \cite{ColesPRMP17}  and references therein). This finding suggests that stabilizer states might be considered the ``discrete quantum Gaussians'' introduced in~\cite{BGJ23a}.

Furthermore, for a pure state $\rho=\proj{\psi}$, the rank $ \text{Rank}(\proj{\psi})$ is always equal to $1$. Consequently, the Pauli rank,
$\chi_P(\proj{\psi})$ must be greater than or equal to $d^n$.
Equality holds only if the pure state $\ket{\psi} $ is a stabilizer state. 
Combined with the invariance under Clifford unitaries, this uncertainty 
principle guarantees that the Pauli rank $\chi_P$ can be used to quantify non-stabilizerness of quantum states, as demonstrated in \cite{Bu19}.

The uncertainty principle in Theorem \ref{thm:UP} applies to all the local dimensions $d$. 
However, for odd prime dimensions, we can leverage the (discrete) Wigner function~\cite{Gross06}.
Within the phase space, we define phase space point operators denoted by $T(\vec x)$ with $\vec x\in V^n$. 
These operators are defined as  $T(\vec x)=w(\vec x)T(\vec 0) w(\vec x)^\dag$, and $T(\vec 0)$ is given by $T(\vec 0) = \frac{1}{d^n}\sum_{\vec u\in V^n}w(\vec{u})$.  
The set $\set{T(\vec x)}_{\vec x\in V^n}$ forms an orthonormal basis  with respect to the inner
product $\inner{A}{B}=\frac{1}{d^n}\Tr{A^\dag B}$.
This allows us to express any quantum state $\rho$ as a linear combination of these operators:
$\rho=\sum_{\vec x}W_{\rho}(\vec x)T(\vec x)$, where $W_{\rho}(\vec x)=\inner{T(\vec x)}{\rho}$ is the Wigner function. 
 Notably, the Wigner function is real-valued due to the Hermitian nature of the operators $T(\vec x)$.

Similar to the Pauli rank, 
 we can define the Wigner rank $\chi_W(\rho)$ as the size of the support of the  Wigner function $W_{\rho}$.
Then, we can establish another uncertainty principle between the max-entropy $S_{\max}$ and 
the Wigner rank.

\begin{thm}[\bf Uncertainty principle for Wigner rank ]
Given an $n$-qudit state $\rho$ with odd prime $d$, we have 
    \begin{eqnarray}
      S_{\max}(\rho)+\log \chi_W(\rho)
        \geq n\log d,
    \end{eqnarray}
    and the equality holds iff $\rho$ is a pure stabilizer state.
\end{thm}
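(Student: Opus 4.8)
The plan is to sandwich the purity $\Tr{\rho^2}$ between a lower bound governed by $\text{Rank}(\rho)$ and an upper bound governed by $\chi_W(\rho)$. For the lower bound, if $R=\text{Rank}(\rho)$ then Cauchy--Schwarz applied to the eigenvalues of $\rho$ gives $\Tr{\rho^2}\ge 1/R$, with equality precisely when $\rho$ is the maximally mixed state on an $R$-dimensional subspace. For the upper bound I would first record two standard facts about the discrete Wigner representation at odd prime $d$. Since $\set{T(\vec x)}_{\vec x\in V^n}$ is orthonormal for $\inner{A}{B}=\tfrac1{d^n}\Tr{A^\dag B}$ and the $T(\vec x)$ are Hermitian, Parseval's identity gives $\Tr{\rho^2}=d^n\sum_{\vec x}W_\rho(\vec x)^2$. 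Moreover, for odd $d$ each $T(\vec x)$ is a Hermitian unitary ($T(\vec x)^2=\I$), so $\norm{T(\vec x)}_\infty=1$ and hence $\abs{W_\rho(\vec x)}=\tfrac1{d^n}\abs{\Tr{T(\vec x)\rho}}\le \tfrac1{d^n}$ for every $\vec x$ and every state $\rho$.

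Combining these, $\sum_{\vec x}W_\rho(\vec x)^2\le \norm{W_\rho}_\infty^2\,\chi_W(\rho)\le \chi_W(\rho)/d^{2n}$, so $\Tr{\rho^2}\le \chi_W(\rho)/d^{n}$. Together with $\Tr{\rho^2}\ge 1/R$ this yields $d^n\le R\,\chi_W(\rho)$, and taking logarithms gives $S_{\max}(\rho)+\log\chi_W(\rho)\ge n\log d$, which is the claimed inequality.

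For the equality case I would trace back the chain. Equality forces: (i) $\rho=\tfrac1R\Pi$ for a rank-$R$ projector $\Pi$; (ii) $\abs{W_\rho(\vec x)}$ is constant on its support; and (iii) that constant equals $1/d^n$. Let $a$ (resp. $b$) be the number of phase-space points with $W_\rho=+1/d^n$ (resp. $-1/d^n$). Normalization $\sum_{\vec x}W_\rho(\vec x)=\Tr{\rho}=1$ gives $a-b=d^n$, while (i)--(iii) combined with $\Tr{\rho^2}=\chi_W(\rho)/d^n=1/R$ give $a+b=\chi_W(\rho)=d^n/R$. Since $R\ge 1$ this forces $b=0$ and $R=1$: $\rho=\proj{\psi}$ is pure, $W_\rho$ is supported on exactly $d^n$ points each equal to $1/d^n$, and in particular $W_\rho\ge 0$. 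By the discrete Hudson theorem of Gross~\cite{Gross06} --- for odd prime (power) dimension the pure states with non-negative Wigner function are exactly the stabilizer states --- $\rho$ is a pure stabilizer state. Conversely, a pure stabilizer state has Wigner function equal to $1/d^n$ times the indicator of a Lagrangian affine subspace of $V^n$, so $\chi_W(\rho)=d^n$ and $S_{\max}(\rho)=0$, saturating the bound.

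The main obstacle is the last step of the ``only if'' direction: identifying a pure state whose Wigner function is non-negative (here, even $\set{0,1/d^n}$-valued) as a stabilizer state. The inequality and the reduction of the equality case to this point are elementary. If one prefers not to invoke the discrete Hudson theorem, a hands-on alternative is available: since $d^n W_\rho(\vec x)=\Tr{T(\vec x)\rho}\in\set{0,1}$ on the support and $T(\vec x)^2=\I$, one gets $T(\vec x)\ket{\psi}=\ket{\psi}$ for each such $\vec x$; one then shows, using the composition law for phase-space point operators, that this set of $d^n$ points is a coset of a Lagrangian subspace, which pins down $\ket{\psi}$ as the associated stabilizer state. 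Either route handles the $d$ odd prime hypothesis through the identity $T(\vec x)^2=\I$, which fails at $d=2$.
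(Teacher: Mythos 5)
Your proposal is correct and follows essentially the same route as the paper: Parseval's identity for the phase-point-operator basis, a Cauchy--Schwarz/support-counting bound to get $d^n\le\mathrm{Rank}(\rho)\,\chi_W(\rho)$, and the discrete Hudson theorem of Gross to identify the pure states with constant non-negative Wigner function as stabilizer states. The only (harmless) differences are in which elementary endpoint bounds you use --- $\Tr{\rho^2}\ge 1/\mathrm{Rank}(\rho)$ and $\abs{W_\rho}\le d^{-n}$ from $T(\vec x)^2=\I$, versus the paper's $1=\sum_{\vec x}W_\rho(\vec x)\le\sum_{\vec x}\abs{W_\rho(\vec x)}$ and $\Tr{\rho^2}\le\mathrm{Rank}(\rho)$ --- and your bookkeeping with the counts $a$ and $b$ in the equality case reaches the same conclusion (purity and $W_\rho\ge 0$) that the paper extracts directly from its chain.
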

Different from Theorem \ref{thm:UP},
the uncertainty principle  for Wigner rank 
is only saturated by pure stabilizer states.
Note that, for pure state $\proj{\psi}$, $\chi_W(\proj{\psi})\geq d^n$, with equality iff 
$\psi$ is pure stabilizer states. Therefore, akin to the Pauli rank \cite{Bu19}, the
Wigner rank can also serve as a tool to quantify non-stabilizerness.

The characteristic function and the Wigner function are related through the symplectic Fourier transform~\cite{Gross06}. This connection allows us to establish the following uncertainty principle using both the Pauli rank and the Wigner rank, where the
stabilizer states achieve saturation. For a concise overview, Table~\ref{tab:sum_A} summarizes these three uncertainty principles.

\begin{prop}\label{prop:xw}
    Given an $n$-qudit state $\rho$ with odd prime $d$, we have 
    \begin{eqnarray}
        \log \chi_P(\rho)+\log \chi_W(\rho)
        \geq 2n\log d,
    \end{eqnarray}
    and the equality holds iff $\rho$ is a stabilizer state, i.e.,  $\rho=\mathcal{M}(\rho)$.
\end{prop}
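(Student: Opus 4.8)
The plan is to read the inequality off a Fourier uncertainty principle on the phase space $V^n\cong\mathbb{Z}_d^{2n}$ and then extract the equality case from the mean-state description of stabilizer states. First I would note that, for odd prime $d$, the Wigner function $W_\rho$ is, up to an overall constant and the invertible linear change of variables implemented by the symplectic form, the discrete Fourier transform of the characteristic function $\Xi_\rho$ over $V^n$; hence $\chi_W(\rho)=|\text{Supp}(W_\rho)|$ is the size of the support of the Fourier transform of $\Xi_\rho$, while $\chi_P(\rho)=|\text{Supp}(\Xi_\rho)|$. Since $\Xi_\rho(\vec 0)=\Tr{\rho}=1$, the function $\Xi_\rho$ is nonzero, so the Donoho--Stark uncertainty principle on the finite abelian group $V^n$, namely $|\text{Supp}(f)|\cdot|\text{Supp}(\widehat f)|\ge|V^n|=d^{2n}$ for $f\neq 0$, applies to $f=\Xi_\rho$ and yields $\chi_P(\rho)\,\chi_W(\rho)\ge d^{2n}$; taking logarithms gives the stated bound.

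For the ``only if'' direction of the equality case, I would revisit the elementary proof of Donoho--Stark: it factors through the estimate $\|\Xi_\rho\|_2^2\le|\text{Supp}(\Xi_\rho)|\cdot\|\Xi_\rho\|_\infty^2$, so saturation of the whole chain forces $|\Xi_\rho|$ to be constant on its support. As $\vec 0\in\text{Supp}(\Xi_\rho)$ with $\Xi_\rho(\vec 0)=1$ and $|\Xi_\rho(\vec x)|\le 1$ for every $\vec x$ (because $w(\vec x)$ is unitary and $\rho$ a state), this constant equals $1$. Therefore $\text{Supp}(\Xi_\rho)=\{\vec x:|\Xi_\rho(\vec x)|=1\}$, and comparing with the definition~\eqref{0109shi6} of the mean state gives $\Xi_{\mathcal{M}(\rho)}=\Xi_\rho$, i.e.\ $\rho=\mathcal{M}(\rho)$, so $\rho$ is a stabilizer state.

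For the converse I would use the structural fact that if $\rho$ is a stabilizer state then $\Xi_\rho=\beta\cdot\mathbf{1}_{M}$, where $M\le V^n$ is the isotropic phase-space subgroup underlying the stabilizer group $G_\rho$ and $\beta\colon M\to U(1)$ is a character: writing the $+1$-stabilizers of $\rho$ as $\beta(\vec x)\,w(\vec x)$ for $\vec x\in M$, the assignment $\vec x\mapsto\beta(\vec x)w(\vec x)$ is a group homomorphism, and since $M$ is isotropic $w$ restricts to a representation of $M$ (for odd $d$), so $\beta$ is multiplicative; moreover $\Xi_\rho$ vanishes off $M$ (cf.\ \cite{BGJ23b}). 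Because $\complex^{\times}$ is divisible, $\beta$ extends to a character $\chi$ of $V^n$, so $\Xi_\rho=\chi\cdot\mathbf{1}_M$. Its symplectic Fourier transform is then a scalar multiple of $\mathbf{1}_{\vec c+M^{\perp}}$, where $M^{\perp}$ is the symplectic complement of $M$ and $\vec c$ is determined by $\chi$; hence $\chi_W(\rho)=|M^{\perp}|=d^{2n}/|M|=d^{2n}/\chi_P(\rho)$, and equality holds.

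I expect the main obstacle to be the bookkeeping in the converse: pinning down that $\text{Supp}(\Xi_\rho)$ is exactly the subgroup $M$ and that the phase $\beta$ is genuinely multiplicative (which uses isotropy of $M$ together with the odd-$d$ Weyl relations), and then correctly tracking how multiplication by the character $\chi$ and the symplectic change of variables translate the support of the Fourier transform. The inequality itself and the ``only if'' direction are short once the symplectic Fourier dictionary between $\Xi_\rho$ and $W_\rho$ is set up.
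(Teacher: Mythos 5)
Your proof is correct and follows essentially the same route as the paper: the inequality is the Cauchy--Schwarz/Parseval (Donoho--Stark) support uncertainty principle for the symplectic-Fourier pair $(\Xi_\rho, W_\rho)$, and the equality case is traced back in the same way to $|\Xi_\rho|$ being unimodular on its support, i.e.\ $\rho=\mathcal{M}(\rho)$. The only substantive difference is that you spell out the converse in detail (the support of $\Xi_\rho$ is an isotropic subgroup $M$ on which $\Xi_\rho$ is a character, so $W_\rho$ is supported on a coset of $M^{\perp}$ of size $d^{2n}/|M|$), whereas the paper dismisses it as straightforward.
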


\begin{table}[!htbp]
\centering
\begin{tabular}{ |c|c|c|c| } 
\hline
\multirow{3}{*}{Measure}& \\
&Uncertainty principle\\
&\\
\hline
\multirow{4}{*}{ $S_{\max}$ vs $\Xi_{\rho}$}&   \\
& $S_{\max}(\rho)+\log \chi_P(\rho)\geq n\log d$  \\
&``='' iff $\rho$ is a stabilizer state     \\
&\\
\hline
\multirow{4}{*}{$S_{\max}$ vs  $W_{\rho}$}&  \\
& $S_{\max}(\rho)+\log \chi_W(\rho)\geq n\log d$  \\
& ``='' iff  $\rho$ is a pure stabilizer state   \\
&\\
\hline
\multirow{4}{*}{$\Xi_{\rho}$ vs $W_{\rho}$}&   \\
&  $\log \chi_P(\rho)+\log \chi_W(\rho)\geq 2n\log d$    \\
& ``=''  iff  $\rho$ is a stabilizer state \\
&\\
\hline
\end{tabular}
\caption{\label{tab:sum_A}Summary of the uncertainty principles.}

\end{table}

 {\bf \textit{Extremality of stabilizer states in information and correlation measures.---}}
Now, let us explore the extremality of stabilizer states in the quantum information and correlation measures.
We begin with a general result:

\begin{thm}[\bf General result]\label{thm:extre}
   Let $F:D(\otimes^m_i\mathcal{H}_i)\to\real$  be a convex function, 
    where $\otimes^m_i\mathcal{H}_i$ is an $m$-partite system and each subsystem $\mathcal{H}_i$ consists of $n_i$ qudits. If $F$ is
    invariant under local unitary,
    we have 
    \begin{eqnarray}
        F(\rho)\geq F(\mathcal{M}(\rho)),
    \end{eqnarray}
    where $\mathcal{M}(\rho)$ is the stabilizer  state with the same stabilizer group as $\rho$.
\end{thm}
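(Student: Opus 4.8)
The plan is to realize the mean-state map $\rho\mapsto\mathcal{M}(\rho)$ as an average of $\rho$ over conjugation by Pauli operators, each of which is a local unitary for the partition $\otimes_{i=1}^{m}\mathcal{H}_i$, and then to conclude by Jensen's inequality. Throughout, let $[\cdot,\cdot]$ denote the symplectic form on the phase space $V^n=\mathbb{Z}_d^{n}\times\mathbb{Z}_d^{n}$, normalized through the Heisenberg--Weyl relation $w(\vec y)\,w(\vec x)\,w(\vec y)^{\dagger}=\omega_d^{[\vec x,\vec y]}\,w(\vec x)$, and set $S:=\{\vec x\in V^n:|\Xi_\rho(\vec x)|=1\}$. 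By Lemma~12 of \cite{BGJ23b}, $G_\rho=\{w(\vec x):\vec x\in S\}$ is a commuting subgroup of Pauli operators; consequently $S$ is a subgroup of $(V^n,+)$.

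The first and central step is to introduce the symplectic complement $S^{\perp}:=\{\vec y\in V^n:[\vec x,\vec y]=0\ \text{for all}\ \vec x\in S\}$ and to show that the Pauli twirl
\begin{eqnarray*}
\Phi(\sigma):=\frac{1}{|S^{\perp}|}\sum_{\vec y\in S^{\perp}}w(\vec y)\,\sigma\,w(\vec y)^{\dagger}
\end{eqnarray*}
sends $\rho$ exactly to $\mathcal{M}(\rho)$. Indeed, inserting the Pauli expansion $\rho=d^{-n}\sum_{\vec x\in V^n}\Xi_\rho(\vec x)\,w(\vec x)$ and using the Heisenberg--Weyl relation gives $\Phi(\rho)=d^{-n}\sum_{\vec x\in V^n}c(\vec x)\,\Xi_\rho(\vec x)\,w(\vec x)$ with $c(\vec x)=|S^{\perp}|^{-1}\sum_{\vec y\in S^{\perp}}\omega_d^{[\vec x,\vec y]}$. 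Since $\vec y\mapsto\omega_d^{[\vec x,\vec y]}$ is a character of the finite group $S^{\perp}$, one has $c(\vec x)=1$ if $\vec x\in(S^{\perp})^{\perp}$ and $c(\vec x)=0$ otherwise; and because the symplectic form on $\mathbb{Z}_d^{2n}$ is nondegenerate, Pontryagin duality for finite abelian groups gives $(S^{\perp})^{\perp}=S$. Hence $\Phi(\rho)=d^{-n}\sum_{\vec x\in S}\Xi_\rho(\vec x)\,w(\vec x)$, whose characteristic function is precisely the one defining $\mathcal{M}(\rho)$ in \eqref{0109shi6}; since the Pauli operators are a basis, $\Phi(\rho)=\mathcal{M}(\rho)$.

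The second step is the observation that each $w(\vec y)=w(p_1,q_1)\ot\cdots\ot w(p_n,q_n)$ is a tensor product of single-qudit unitaries, so regrouping the factors according to the subsystems $\mathcal{H}_1,\dots,\mathcal{H}_m$ exhibits $w(\vec y)$ as a local unitary $V_1\ot\cdots\ot V_m$ with $V_i$ acting on $\mathcal{H}_i$. Applying convexity of $F$ and then its local-unitary invariance therefore yields
\begin{eqnarray*}
F(\mathcal{M}(\rho))&=&F\!\left(\frac{1}{|S^{\perp}|}\sum_{\vec y\in S^{\perp}}w(\vec y)\,\rho\,w(\vec y)^{\dagger}\right)\\
&\le&\frac{1}{|S^{\perp}|}\sum_{\vec y\in S^{\perp}}F\big(w(\vec y)\,\rho\,w(\vec y)^{\dagger}\big)=\frac{1}{|S^{\perp}|}\sum_{\vec y\in S^{\perp}}F(\rho)=F(\rho),
\end{eqnarray*}
which is the asserted inequality.

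I expect the main obstacle to lie in the identity $\Phi(\rho)=\mathcal{M}(\rho)$, and specifically in making it robust for every local dimension $d\ge 2$: one must verify that $S$ is genuinely a subgroup of $V^n$ (rather than just that $G_\rho$ is a commuting set), and that the double symplectic complement closes, $(S^{\perp})^{\perp}=S$, which for composite $d$ rests on the duality form of nondegeneracy rather than on a dimension count. Once this is in hand, the locality of the twirling Paulis and Jensen's inequality close the argument with no further work.
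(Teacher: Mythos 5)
Your proof is correct and follows the paper's overall strategy---exhibit $\mathcal{M}(\rho)$ as a uniform average of $w(\vec y)\rho\,w(\vec y)^\dag$ over a set of Pauli operators, observe that every Pauli operator is a product of single-qudit unitaries and hence local for any grouping of the qudits, and conclude by convexity plus local-unitary invariance---but you implement the key step by a genuinely different route. The paper first conjugates by a Clifford unitary $U$ sending the generators of $G_\rho$ to $Z_1,\dots,Z_r$, identifies $\mathcal{M}(U\rho U^\dag)$ with a Pauli twirl in that normal form, and undoes the conjugation using the fact that $\mathcal{M}$ commutes with Clifford unitaries (Lemma 15 of \cite{BGJ23b}); that twirl, taken over the full Pauli group on the last $n-r$ qudits, also implicitly uses that the characteristic function of $U\rho U^\dag$ already vanishes outside the centralizer of its stabilizer group. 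You instead twirl directly over the symplectic complement $S^{\perp}$ and verify $\Phi(\rho)=\mathcal{M}(\rho)$ by character orthogonality together with $(S^{\perp})^{\perp}=S$, which you correctly justify for all local dimensions $d\geq 2$ via Pontryagin duality rather than a dimension count. Your version is more self-contained (no normal-form Clifford, no commutation lemma) and arguably more robust for composite $d$. The only point worth making fully explicit is that $S$ is a subgroup of $(V^n,+)$: this follows from the closure of $G_\rho$ under multiplication (Lemma 12 of \cite{BGJ23b}) combined with $w(\vec x)w(\vec x')\propto w(\vec x+\vec x')$, exactly as you indicate.
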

It is important to note that the condition of having the same stabilizer group for $n$-qudit/qubit states plays a similar role as having a fixed covariance matrix for Bosonic states. Both conditions identify a specific class of states within a larger space.

 Theorem \ref{thm:extre} provides a straightforward way
to rederive 
the maximum entropy principle for $n$-qudit/qubit states: $S(\rho)\leq S(\mathcal{M}(\rho))$ (See \cite{BGJ23a,BGJ23b}) by taking $F$ to be minus the quantum 
entropy $S(\rho):=-\Tr{\rho\log \rho}$. 
Furthermore,  it has been shown that $\mathcal{M}(\rho)$  is the closest stabilizer state with respect to the relative entropy $
D(\rho||\sigma):=\Tr{\rho(\log\rho-\log\sigma)}$, i.e., $\min\limits_{\sigma\in\text{STAB}}D(\rho||\sigma)=S(\mathcal{M}(\rho))-S(\rho)$~\cite{BGJ23a,BGJ23b}.

We will now explore the application of Theorem~\ref{thm:extre} to other information and correlation measures. A summary of these applications will be provided in Table~\ref{tab:sum_B}.

{\bf Example: Entanglement entropy--}
One application of Theorem~\ref{thm:extre} is to entanglement entropy. Entanglement entropy, denoted by 
$S(A)_{\rho}$, is defined as the quantum entropy of $\rho_A$, where $\rho_A=\Ptr{A^c}{\rho}$ and 
$A^c$ is the complement of the subsystem $A$. It is a popular measure for quantifying entanglement, particularly in pure states. Many other entanglement measures reduce to entanglement entropy for pure states. 
However, for mixed states, entanglement entropy is not a perfect measure of entanglement. 
Here, Theorem~\ref{thm:extre} comes into play. We can verify that the negative of entanglement entropy, $-S(A)_{\rho}$, 
is both convex and invariant under local unitary operations. Consequently, Theorem~\ref{thm:extre} guarantees the following inequality:
\begin{align}
  S(A)_{\rho}\leq S(A)_{\mathcal{M}(\rho)}.  
\end{align}

{\bf Example: Conditional entropy--}
Another application of Theorem~\ref{thm:extre} is to conditional entropy, denoted by $S(A|B)_{\rho}$. 
Conditional entropy measures the amount of uncertainty about subsystem $A$ after knowing the state of subsystem $B$. 
The negative conditional entropy $-S(A|B)_{\rho}$ is a valuable tool for quantifying quantum correlations\cite{Cerf97,Konig2009operational}. These correlations have diverse applications in areas like quantum communication~\cite{Horodecki2005partial}, quantum thermodynamics~\cite{Rio2011thermodynamic}, and quantum cryptography~\cite{Brown2021computing}.

We can verify that the negative conditional entropy, $-S(A|B)_{\rho}$, satisfies the conditions of Theorem~\ref{thm:extre} (convexity and local unitary invariance). Therefore, the theorem guarantees the following inequality:
\begin{eqnarray}
    S(A|B)_{\rho}\leq S(A|B)_{\mathcal{M}(\rho)}.
\end{eqnarray}
That is,  a state's conditional entropy is always less than or equal to the conditional entropy of its mean state.

The concept can be extended to the more general R\'enyi conditional entropy, defined as
$S_{\alpha}(A|B)_{\rho}=-\inf_{\sigma_B}D_{\alpha}(\rho_{AB}||I_A\ot \sigma_B)$ \cite{Tomamichel2015quantum1}, where the R\'enyi relative entropy $D_{\alpha}(\rho||\sigma):=\frac{1}{\alpha-1}\log\Tr{\left(\sigma^{\frac{1-\alpha}{2\alpha}}\rho\sigma^{\frac{1-\alpha}{2\alpha}}\right)^{\alpha}}$, for $\alpha\in [0, +\infty]$.
 One specific member of this family is the min-entropy $S_{\min}(A|B)_{\rho}:=S_{\infty}(A|B)_{\rho}$, where  $\exp(-S_{\min}(A|B)_{\rho})$ captures the maximum overlap achievable with the Bell state (a maximally entangled state) through local operations on subsystem $B$~\cite{Konig2009operational}. 
 Since 
$S_{\alpha}(A|B)$ is concave for $\alpha\geq 1/2$,  Theorem \ref{thm:extre} again applies:
\begin{eqnarray}
    S_{\alpha}(A|B)_{\rho}\leq S_{\alpha}(A|B)_{\mathcal{M}(\rho)}.
\end{eqnarray}

{\bf Example: Entanglement measure--}
Theorem~\ref{thm:extre} can also be applied to various entanglement measures. Since bipartite and multipartite entanglement measures typically satisfy local unitary invariance, the theorem can be leveraged under certain conditions.
Hence, if the entanglement measure $E$ is convex,  Theorem \ref{thm:extre} guarantees the following inequality:
\begin{eqnarray}
    E(\rho)\geq E(\mathcal{M}(\rho)).
\end{eqnarray}
That is, the entanglement measure of a state is always greater than or equal to the entanglement measure of its mean state.
For any entanglement measure defined by the convex roof\footnote{The convex roof starts with  a measure
$E$ on pure states, and then extends it to mixed 
ones by $E(\rho):=\inf\sum_ip_iE(\proj{\psi_i})$, where the infimum is taken over all the pure state decomposition $\rho=\sum_ip_i\proj{\psi_i}$.}, it must satisfy the convexity, such as the entanglement of formation~\cite{BennettPRA96}. In addition, 
many other entanglement measures, such as negativity~\cite{Zyczkowski98,Vidal02}, and squashed entanglement~\cite{Christandl2004squashed}, are also convex, and thus Theorem~\ref{thm:extre} still applies.

\begin{table}[!htbp]
\centering
\begin{tabular}{ |c|c|c|c| } 
\hline
 \multirow{3}{*}{Information/Correlation Measure} &  \\
 & Extremality \\
 & \\
\hline
\multirow{3}{*}{Quantum entropy}&   \\
& $S(\rho)\leq S(\mathcal{M}(\rho))$  \\
& \cite{BGJ23a,BGJ23b}\\
&\\
\hline
\multirow{3}{*}{Entanglement entropy}&  \\
& $S(A)_{\rho}\leq S(A)_{\mathcal{M}(\rho)}$   \\
&     \\
\hline
\multirow{3}{*}{Conditional  entropy}&   \\
&  $S_{\alpha}(A|B)_{\rho}\leq S_{\alpha}(A|B)_{\mathcal{M}(\rho)}$  \\
&    \\
\hline
\multirow{3}{*}{Convex entanglement measure}&   \\
&  $E(\rho)\geq E(\mathcal{M}(\rho))$  \\
&    \\
\hline
\end{tabular}
\caption{\label{tab:sum_B}Summary of the extremality of stabilizer states.}

\end{table}

 {\bf \textit{Monotonocity of the information and correlation measures under quantum convolution.---}}
 Recently, a new quantum convolution has been introduced to study stabilizer states and channels~\cite{BGJ23a,BGJ23b,BGJ23c,BGJ24a,BJ24a,BJW24a}.
  This convolution applies to both qubit and qudit systems. We focus on the $n$-qudit system in this work, but the results can be extended to $n$-qubit systems using the qubit-convolution defined in~\cite{BGJ23c}.

 The quantum convolution  for qudit systems is defined as follows \cite{BGJ23a,BGJ23b}:
 Given $s,t \in \mathbb{Z}_d$ which satisfy $s^2+t^2\equiv 1 ~({\rm mod}~d)$, the  unitary  operator  $U_{s,t}$  acting on  a $2n$-qudit system $\mathcal{H}_A\ot \mathcal{H}_B $ is
$U_{s,t}:\ket{\vec i}\ot\ket{\vec j}\to \ket{s\vec i+t\vec j  \mod d}\ot \ket{- t\vec i+s\vec j\mod d} $,
where   both $\mathcal{H}_A$ and $\mathcal{H}_B$ are $n$-qudit systems.     
The convolution of two $n$-qudit states $\rho$ and $\sigma$ is 
\begin{align}\label{eq:conv_B}
\rho \boxtimes_{s,t} \sigma = \Ptr{B}{ U_{s,t} (\rho \otimes \sigma) U^\dag_{s,t}}.
\end{align}
We focus on non-trivial parameters $s,t$, i.e., neither is $0$ or $1$. For simplicity,
we denote  $\boxtimes_{s,t}$  by  $\boxtimes$. 

Additionally, for a quantum state $\rho$, $\boxtimes_{L}\rho$ denotes the $L$-th repetition of the quantum convolution, 
defined as 
$\boxtimes_{L}\rho=\boxtimes_{L-1}\rho \boxtimes \rho$ where $\boxtimes_0\rho:=\rho$.
A key result related to the quantum convolution is the quantum central limit theorem. This theorem states that the repeated convolution of a state,
 $\boxtimes_L\rho$, converges to a stabilizer state $\mathcal{M}(\rho)$.
 
Theorem~\ref{thm:extre} shows that under certain conditions, $F(\rho)\geq F(\mathcal{M}(\rho))$  (or $-F(\rho)\leq -F(\mathcal{M}(\rho))$).
This leads us to investigate the behavior of information and correlation measures under convolution. It has already been established that quantum entropy increases monotonically under quantum convolution~\cite{BGJ23a,BGJ23b}. In this work, we explore two other information and correlation measures under convolution.
 A summary of the monotonicity of these information measures is provided in Table~\ref{tab:sum_C}.

\begin{prop}[\bf Monotonicity of entanglement entropy]\label{prop:EEM}
 The entanglement entropy is monotonically increasing under quantum convolution, i.e., 
 \begin{eqnarray}
     S(A)_{\boxtimes_L\rho}
     \leq   S(A)_{\boxtimes_{L+1}\rho},
 \end{eqnarray}
 for any $L\geq 0$, where $S(A)_{\boxtimes_L\rho}=S((\boxtimes_L\rho)_A)$ is the entanglement entropy of subsystem $A$ after applying the $L$-th iteration of the quantum convolution to the state $\rho$.
\end{prop}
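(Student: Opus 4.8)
The plan is to reduce the monotonicity statement to an application of Theorem~\ref{thm:extre} combined with the basic behavior of the mean state under convolution. The key observation is that the mean state is ``idempotent'' with respect to convolution in the following sense: $\mathcal{M}(\boxtimes_L\rho)=\mathcal{M}(\rho)$ for every $L\geq 0$, and moreover $\mathcal{M}(\rho\boxtimes\sigma)=\mathcal{M}(\rho)\boxtimes\mathcal{M}(\sigma)$; these facts follow from the multiplicativity of the characteristic function under the convolution $\boxtimes_{s,t}$, which is recorded in \cite{BGJ23a,BGJ23b}. In particular, since $\mathcal{M}(\rho)$ is a stabilizer state and stabilizer states are fixed points of $\boxtimes$ (again by the characteristic-function formula), we have $\boxtimes_L\rho \to \mathcal{M}(\rho)$ but also, at every finite step, $\mathcal{M}(\boxtimes_L\rho)=\mathcal{M}(\rho)$.

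First I would write $\tau_L := \boxtimes_L\rho$, so that $\tau_{L+1} = \tau_L\boxtimes\rho$, and more usefully observe that $\tau_{L+1}$ can be obtained from $\tau_L\otimes\tau_L$-type data; however the cleaner route is to compare $\tau_{L+1}$ directly with $\tau_L$. The entanglement entropy $S(A)_\rho$ is taken with respect to a bipartition of the $n$-qudit system that is \emph{preserved} by the convolution, because $U_{s,t}$ acts diagonally across the tensor factors labelled by qudit position — i.e. $U_{s,t}$ is a product of two-qudit gates coupling the $k$-th qudit of $\mathcal{H}_A$-copy with the $k$-th qudit of $\mathcal{H}_B$-copy — so the convolution is a local operation with respect to the $A\,|\,A^c$ split on each factor and hence $S(A)$ is a function to which the local-unitary-invariance hypothesis of Theorem~\ref{thm:extre} applies. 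Then I would invoke the general inequality $S(A)_{\tau_{L+1}} \le S(A)_{\mathcal{M}(\tau_{L+1})}$ — wait, that is the wrong direction; instead the right comparison is between consecutive iterates, not between an iterate and its mean.

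So the actual engine is the following: convolution by one more step is, up to local unitaries, a partial trace of $\tau_L\otimes\tau_L$ (taking $\sigma=\rho$ and then using that $\boxtimes_{L+1}\rho$ and $\tau_L\boxtimes\tau_L$ have related mean states) — and partial trace plus the fact that $-S(A)$ is convex and LU-invariant gives monotonicity. Concretely, I expect the argument to mirror the known proof that quantum entropy increases under convolution \cite{BGJ23a,BGJ23b}: one shows $(\boxtimes_{L+1}\rho)_A = \Ptr{B_A}{U_{s,t}((\tau_L)_A\otimes(\tau_?)_A)U_{s,t}^\dag}$ — i.e. the bipartition commutes with the convolution structure — so that $S(A)_{\boxtimes_{L+1}\rho} = S\big((\tau_L \boxtimes \rho)_A\big)$ equals the entanglement entropy of a convolution of the reduced states, and then apply the entropy-monotonicity-under-convolution result \emph{at the level of the subsystem $A$} together with concavity of entropy under the partial trace that defines convolution. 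The one-line version: $S(A)$ is the composition of ``restrict to $A$'' with quantum entropy, ``restrict to $A$'' commutes with $\boxtimes$ because $U_{s,t}$ is local across the bipartition, and quantum entropy is monotone under $\boxtimes$; chaining these gives the claim.

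The main obstacle, and the step I would be most careful about, is justifying that the convolution genuinely commutes with the restriction to subsystem $A$ — i.e. that $(\rho\boxtimes_{s,t}\sigma)_A = \rho_A \boxtimes_{s,t}\sigma_A$ as operations on the $|A|$-qudit reduced systems. This requires unpacking the definition of $U_{s,t}$ as $\ket{\vec i}\ket{\vec j}\mapsto\ket{s\vec i+t\vec j}\ket{-t\vec i+s\vec j}$ and noting it factorizes across qudit positions, so that tracing out $\mathcal{H}_B$ and then restricting to the $A$-qudits equals restricting first and then tracing out; the bookkeeping of which qudits of the $2n$-qudit register belong to ``$B$'' versus to ``$A^c$'' is where an error could creep in. Once that commutation is established, the remaining inequality $S(A)_{(\tau_L)_A\boxtimes\rho_A}\geq S(A)_{(\tau_L)_A}$ is exactly the qudit entropy-monotonicity-under-convolution statement of \cite{BGJ23a,BGJ23b} applied on the $A$-register — or, alternatively, a direct consequence of Theorem~\ref{thm:extre} applied to $F=-S$ after one more convolution step, using subadditivity/concavity of entropy under partial trace together with $\mathcal{M}$-idempotence. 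I would also double-check the boundary case $L=0$, where $S(A)_\rho \le S(A)_{\rho\boxtimes\rho}$ must hold with the same argument.
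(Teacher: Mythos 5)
Your final argument is exactly the paper's proof: the partial trace onto $A$ commutes with the convolution because $U_{s,t}$ factorizes across qudit positions, so $S(A)_{\boxtimes_L\rho}=S(\boxtimes_L\rho_A)$, and the claim then follows from the known monotonicity of quantum entropy under convolution applied to $\rho_A$. The early digressions (mean-state idempotence, the aborted appeal to Theorem~\ref{thm:extre}) are unnecessary, but the core reduction you settle on is correct and is precisely what the paper does.
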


\begin{prop}[\bf Monotonicity of conditional entropy]\label{prop:CEM}
 The conditional entropy $S_{\alpha}(A|B)$ is monotonically increasing under quantum  convolution for any $\alpha\geq 1/2$, i.e., 
 \begin{eqnarray}
     S_{\alpha}(A|B)_{\boxtimes_L\rho}
     \leq   S_{\alpha}(A|B)_{\boxtimes_{L+1}\rho},
 \end{eqnarray}
 for any $L\geq 0$.
\end{prop}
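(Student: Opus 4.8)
The plan is to combine the extremality inequality of Theorem~\ref{thm:extre} with two structural features of the quantum convolution: its compatibility with the mean-state map, and a subadditivity/data-processing-type estimate that relates $S_\alpha(A|B)$ of a convolved state to that of its factors. First I would record the elementary fact that the convolution $\boxtimes=\boxtimes_{s,t}$ acts bipartitely, i.e. on the $2n$-qudit tensor-product system with bipartition $A\ot B$, each qudit of $A$ is convolved with the corresponding qudit of $A$-side of the second copy, and likewise for $B$; equivalently, $U_{s,t}$ factorizes through the $A$-registers and the $B$-registers separately, so that $(\rho\boxtimes\sigma)_A=\rho_A\boxtimes\sigma_A$ and the whole operation is a local operation with respect to the $A|B$ cut acting identically on both sides. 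This is the point that makes $S_\alpha(A|B)$ behave well.

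The key step is to show $S_\alpha(A|B)_{\rho\boxtimes\sigma}\geq \tfrac12\big(S_\alpha(A|B)_\rho+S_\alpha(A|B)_\sigma\big)$, or more precisely $S_\alpha(A|B)_{\boxtimes_{L+1}\rho}\geq S_\alpha(A|B)_{\boxtimes_L\rho}$ directly. I expect the cleanest route is: (i) the convolution $\rho\mapsto\rho\boxtimes\sigma$, viewed with $\sigma$ fixed, is a composition of the isometry $\rho\mapsto\rho\ot\sigma$, a unitary $U_{s,t}$ that respects the $A|B$ bipartition, and a partial trace over the ancillary $B$-side copy — all of which are operations under which $S_\alpha(A|B)$ does not decrease when the traced-out system sits on the conditioning side appropriately, using that $S_\alpha(A|B)_\rho=-\inf_{\sigma_B}D_\alpha(\rho_{AB}\|I_A\ot\sigma_B)$ is monotone under local operations on $B$ and under appending product ancillas; and (ii) by symmetry of $\boxtimes$ in its two arguments (which holds up to a local unitary, hence leaves $S_\alpha(A|B)$ invariant) one gets the averaged bound, and iterating with $\sigma=\boxtimes_L\rho$, $\rho=\boxtimes_{L}\rho$ for the last copy gives monotonicity along the sequence $\boxtimes_L\rho$. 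Alternatively — and this is where Theorem~\ref{thm:extre} enters as a sanity anchor and possibly as the actual mechanism — one uses that $\mathcal{M}(\boxtimes_L\rho)$ is eventually constant (it equals $\mathcal{M}(\rho)$) and that $\boxtimes$ drives states toward their mean state, combined with concavity of $S_\alpha(A|B)$ for $\alpha\geq 1/2$ and convexity of $-S_\alpha(A|B)$, to sandwich the sequence.

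The main obstacle will be establishing the single-step inequality $S_\alpha(A|B)_{\boxtimes_{L+1}\rho}\geq S_\alpha(A|B)_{\boxtimes_L\rho}$ with the correct handling of the partial trace: tracing out a subsystem generally can either increase or decrease conditional entropy depending on which side it is attached to, so I must verify that in the convolution the discarded register is the ancillary $B$-copy and that this is a legitimate local channel on the conditioning system $B$ — for which the relevant fact is that $S_\alpha(A|BB')_{\tau}\leq S_\alpha(A|B)_{\tau_{AB}}$ fails in general but the reverse-type statement (adding then processing the conditioning system) is what we need, i.e. data processing of $D_\alpha$ under the CPTP map $\mathrm{Tr}_{B'}$ acting on $B B'$ and keeping $I_A$. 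Concretely, since $D_\alpha(\rho_{AB}\|I_A\ot\sigma_B)$ is monotone non-increasing under any CPTP map acting on the $AB$ system that fixes $I_A$ up to the appropriate structure, and $U_{s,t}$ followed by $\mathrm{Tr}_{B\text{-copy}}$ is such a map after the embedding $\rho\mapsto\rho\ot(\boxtimes_L\rho)$, the chain of inequalities closes. For $\alpha=\infty$ (the min-entropy case) the same argument goes through via the operational characterization, and the intermediate $\alpha$'s follow from the stated concavity together with the data-processing inequality for the sandwiched Rényi divergence, which holds precisely for $\alpha\geq 1/2$ — matching the hypothesis of the proposition and of Theorem~\ref{thm:extre}.
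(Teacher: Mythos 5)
Your proposal is correct and follows essentially the same route as the paper: the single-step inequality comes from the data-processing inequality for the sandwiched R\'enyi divergence (valid for $\alpha\geq 1/2$) applied to the CPTP map $X\mapsto X\boxtimes(\boxtimes_L\rho)$, together with the fact that this map sends $I_A\otimes\sigma_B$ to $I_A\otimes\bigl(\sigma_B\boxtimes(\boxtimes_L\rho)_B\bigr)$ because $U_{s,t}$ factorizes across the $A|B$ cut and the maximally mixed state is absorbing under $\boxtimes$. The only points worth tightening are that the discarded register is the entire second copy (both its $A$- and $B$-parts), not only its $B$-part, and that your alternative detours (the averaged bound and the convergence-to-$\mathcal{M}(\rho)$ sandwich) are unnecessary once the direct data-processing step is in place.
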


Quantum convolution has also been used to define a new coarse-graining procedure called the Convolution Group (CG)~\cite{BJW24a}. This concept shares similarities with the Renormalization Group (RG) framework. In the CG framework, the mean state $\mathcal{M}(\rho)$ acts as the fixed point. Notably, the CG has been applied to classify different quantum states~\cite{BJW24a}.

In the context of the RG, establishing monotonicity under a coarse-graining procedure is crucial, particularly for understanding irreversibility. Examples include the c-theorem for the 2D RG flow~\cite{Zamolodchikov1986irreversibility} and its extensions to higher dimensions~\cite{Cardy1988there,Osborn1989derivation,Komargodski2011renormalization,NishiokaRMP18}. Therefore, investigating monotonicity under the CG is also of interest in this context.

Our propositions, Propositions~\ref{prop:EEM} and~\ref{prop:CEM}, can be used to establish the monotonicity of entanglement entropy and conditional entropy under the CG. Furthermore, since $\exp(-S_{\alpha}(A|B))$ can be used to quantify quantum correlation, our propositions also imply that the 
quantum correlation is monotonically decreasing under the CG. 

\begin{table}[!htbp]
\centering
\begin{tabular}{ |c|c|c|c| } 
\hline
\multirow{3}{*}{Information measure} & \\
& Monotonicity\\
&\\
\hline
\multirow{4}{*}{Quantum entropy}&   \\
& $S(\boxtimes_L\rho)\leq S(\boxtimes_{L+1}\rho)$   \\
&\cite{BGJ23a,BGJ23b}\\
&\\
\hline
\multirow{3}{*}{Entanglement entropy}&  \\
& $S(A)_{\boxtimes_L\rho}\leq S(A)_{\boxtimes_{L+1}\rho}$   \\
&     \\
\hline
\multirow{3}{*}{Conditional  entropy}&   \\
&  $S_{\alpha}(A|B)_{\boxtimes_L\rho}\leq S_{\alpha}(A|B)_{\boxtimes_{L+1}\rho}$  \\
&\\
\hline
\end{tabular}
\caption{\label{tab:sum_C}Summary of the monotonicity of the information measures under quantum convolution $\boxtimes$.}

\end{table}

\section{Conclusion and Discussion}

The concept of extremality for stabilizer states, along with the associated quantum central limit theorem, sheds light on why these states can be considered discrete quantum Gaussians. This makes them particularly important in quantum information theory and quantum statistical mechanics.

Our work further demonstrates that deviations from the extremality of stabilizer states can be used to quantify the degree of non-stabilizerness in a state. These novel uncertainty principles, combined with the established monotonicity of correlations under convolution, offer valuable new insights into the nature of stabilizer states.

This work establishes the uncertainty principles for max-entropy, Pauli rank, and Wigner rank. Extending these results to other entropy measures and investigating the effect of memory on the uncertainty principle is an interesting avenue for future exploration (cf.  \cite{Berta2010uncertainty}).

Furthermore, the Choi–Jamiołkowski isomorphism ~\cite{Choi75,Jamio72} suggests the possibility of analyzing the extremality of Clifford unitaries (or stabilizer channels) using a similar approach. Additionally, generalizing the observed monotonicity of entanglement entropy under convolution to other entanglement measures, such as the entanglement of formation, is another promising direction for future research.

\section*{Acknowledgements}
The author thanks Diandian Wang,  Yuanjie Ren, and Roy Garcia for the discussion.
 KB is supported by the ARO Grant W911NF-19-1-0302 and the ARO
MURI Grant W911NF-20-1-0082.

\bibliography{reference}{}

\begin{thebibliography}{59}%
\makeatletter
\providecommand \@ifxundefined [1]{%
 \@ifx{#1\undefined}
}%
\providecommand \@ifnum [1]{%
 \ifnum #1\expandafter \@firstoftwo
 \else \expandafter \@secondoftwo
 \fi
}%
\providecommand \@ifx [1]{%
 \ifx #1\expandafter \@firstoftwo
 \else \expandafter \@secondoftwo
 \fi
}%
\providecommand \natexlab [1]{#1}%
\providecommand \enquote  [1]{``#1''}%
\providecommand \bibnamefont  [1]{#1}%
\providecommand \bibfnamefont [1]{#1}%
\providecommand \citenamefont [1]{#1}%
\providecommand \href@noop [0]{\@secondoftwo}%
\providecommand \href [0]{\begingroup \@sanitize@url \@href}%
\providecommand \@href[1]{\@@startlink{#1}\@@href}%
\providecommand \@@href[1]{\endgroup#1\@@endlink}%
\providecommand \@sanitize@url [0]{\catcode `\\12\catcode `\$12\catcode
  `\&12\catcode `\#12\catcode `\^12\catcode `\_12\catcode `\%12\relax}%
\providecommand \@@startlink[1]{}%
\providecommand \@@endlink[0]{}%
\providecommand \url  [0]{\begingroup\@sanitize@url \@url }%
\providecommand \@url [1]{\endgroup\@href {#1}{\urlprefix }}%
\providecommand \urlprefix  [0]{URL }%
\providecommand \Eprint [0]{\href }%
\providecommand \doibase [0]{https://doi.org/}%
\providecommand \selectlanguage [0]{\@gobble}%
\providecommand \bibinfo  [0]{\@secondoftwo}%
\providecommand \bibfield  [0]{\@secondoftwo}%
\providecommand \translation [1]{[#1]}%
\providecommand \BibitemOpen [0]{}%
\providecommand \bibitemStop [0]{}%
\providecommand \bibitemNoStop [0]{.\EOS\space}%
\providecommand \EOS [0]{\spacefactor3000\relax}%
\providecommand \BibitemShut  [1]{\csname bibitem#1\endcsname}%
\let\auto@bib@innerbib\@empty
\bibitem [{\citenamefont {Gottesman}(1997)}]{Gottesman97}%
  \BibitemOpen
  \bibfield  {author} {\bibinfo {author} {\bibfnamefont {D.}~\bibnamefont
  {Gottesman}},\ }\bibfield  {title} {\bibinfo {title} {Stabilizer codes and
  quantum error correction},\ }\href {https://arxiv.org/abs/quant-ph/9705052}
  {\bibfield  {journal} {\bibinfo  {journal} {arXiv:quant-ph/9705052}\ }
  (\bibinfo {year} {1997})}\BibitemShut {NoStop}%
\bibitem [{\citenamefont {Shor}(1995)}]{ShorPRA95}%
  \BibitemOpen
  \bibfield  {author} {\bibinfo {author} {\bibfnamefont {P.~W.}\ \bibnamefont
  {Shor}},\ }\bibfield  {title} {\bibinfo {title} {Scheme for reducing
  decoherence in quantum computer memory},\ }\href
  {https://doi.org/10.1103/PhysRevA.52.R2493} {\bibfield  {journal} {\bibinfo
  {journal} {Phys. Rev. A}\ }\textbf {\bibinfo {volume} {52}},\ \bibinfo
  {pages} {R2493} (\bibinfo {year} {1995})}\BibitemShut {NoStop}%
\bibitem [{\citenamefont {Kitaev}(2003)}]{Kitaev_toric}%
  \BibitemOpen
  \bibfield  {author} {\bibinfo {author} {\bibfnamefont {A.}~\bibnamefont
  {Kitaev}},\ }\bibfield  {title} {\bibinfo {title} {Fault-tolerant quantum
  computation by anyons},\ }\href
  {https://doi.org/10.1016/S0003-4916(02)00018-0} {\bibfield  {journal}
  {\bibinfo  {journal} {Annals of Physics}\ }\textbf {\bibinfo {volume}
  {303}},\ \bibinfo {pages} {2} (\bibinfo {year} {2003})}\BibitemShut {NoStop}%
\bibitem [{Goo(2023)}]{Google2023suppressing}%
  \BibitemOpen
  \bibfield  {title} {\bibinfo {title} {Suppressing quantum errors by scaling a
  surface code logical qubit},\ }\href
  {https://doi.org/10.1038/s41586-022-05434-1} {\bibfield  {journal} {\bibinfo
  {journal} {Nature}\ }\textbf {\bibinfo {volume} {614}},\ \bibinfo {pages}
  {676} (\bibinfo {year} {2023})}\BibitemShut {NoStop}%
\bibitem [{\citenamefont {Bluvstein}\ \emph {et~al.}(2024)\citenamefont
  {Bluvstein}, \citenamefont {Evered}, \citenamefont {Geim}, \citenamefont
  {Li}, \citenamefont {Zhou}, \citenamefont {Manovitz}, \citenamefont {Ebadi},
  \citenamefont {Cain}, \citenamefont {Kalinowski}, \citenamefont {Hangleiter}
  \emph {et~al.}}]{Bluvstein2024logical}%
  \BibitemOpen
  \bibfield  {author} {\bibinfo {author} {\bibfnamefont {D.}~\bibnamefont
  {Bluvstein}}, \bibinfo {author} {\bibfnamefont {S.~J.}\ \bibnamefont
  {Evered}}, \bibinfo {author} {\bibfnamefont {A.~A.}\ \bibnamefont {Geim}},
  \bibinfo {author} {\bibfnamefont {S.~H.}\ \bibnamefont {Li}}, \bibinfo
  {author} {\bibfnamefont {H.}~\bibnamefont {Zhou}}, \bibinfo {author}
  {\bibfnamefont {T.}~\bibnamefont {Manovitz}}, \bibinfo {author}
  {\bibfnamefont {S.}~\bibnamefont {Ebadi}}, \bibinfo {author} {\bibfnamefont
  {M.}~\bibnamefont {Cain}}, \bibinfo {author} {\bibfnamefont {M.}~\bibnamefont
  {Kalinowski}}, \bibinfo {author} {\bibfnamefont {D.}~\bibnamefont
  {Hangleiter}}, \emph {et~al.},\ }\bibfield  {title} {\bibinfo {title}
  {Logical quantum processor based on reconfigurable atom arrays},\ }\href
  {https://doi.org/10.1038/s41586-023-06927-3} {\bibfield  {journal} {\bibinfo
  {journal} {Nature}\ }\textbf {\bibinfo {volume} {626}},\ \bibinfo {pages}
  {58} (\bibinfo {year} {2024})}\BibitemShut {NoStop}%
\bibitem [{\citenamefont {Gottesman}(1998)}]{gottesman1998heisenberg}%
  \BibitemOpen
  \bibfield  {author} {\bibinfo {author} {\bibfnamefont {D.}~\bibnamefont
  {Gottesman}},\ }\bibfield  {title} {\bibinfo {title} {{The Heisenberg
  representation of quantum computers}}\ }(\bibinfo {year} {1998})\ pp.\
  \bibinfo {pages} {32--43},\ \Eprint {https://arxiv.org/abs/quant-ph/9807006}
  {arXiv:quant-ph/9807006} \BibitemShut {NoStop}%
\bibitem [{\citenamefont {Bravyi}\ and\ \citenamefont
  {Gosset}(2016)}]{BravyiPRL16}%
  \BibitemOpen
  \bibfield  {author} {\bibinfo {author} {\bibfnamefont {S.}~\bibnamefont
  {Bravyi}}\ and\ \bibinfo {author} {\bibfnamefont {D.}~\bibnamefont
  {Gosset}},\ }\bibfield  {title} {\bibinfo {title} {Improved classical
  simulation of quantum circuits dominated by {Clifford} gates},\ }\href
  {https://doi.org/10.1103/PhysRevLett.116.250501} {\bibfield  {journal}
  {\bibinfo  {journal} {Phys. Rev. Lett.}\ }\textbf {\bibinfo {volume} {116}},\
  \bibinfo {pages} {250501} (\bibinfo {year} {2016})}\BibitemShut {NoStop}%
\bibitem [{\citenamefont {Bravyi}\ \emph {et~al.}(2016)\citenamefont {Bravyi},
  \citenamefont {Smith},\ and\ \citenamefont {Smolin}}]{BravyiPRX16}%
  \BibitemOpen
  \bibfield  {author} {\bibinfo {author} {\bibfnamefont {S.}~\bibnamefont
  {Bravyi}}, \bibinfo {author} {\bibfnamefont {G.}~\bibnamefont {Smith}},\ and\
  \bibinfo {author} {\bibfnamefont {J.~A.}\ \bibnamefont {Smolin}},\ }\bibfield
   {title} {\bibinfo {title} {Trading classical and quantum computational
  resources},\ }\href {https://doi.org/10.1103/PhysRevX.6.021043} {\bibfield
  {journal} {\bibinfo  {journal} {Phys. Rev. X}\ }\textbf {\bibinfo {volume}
  {6}},\ \bibinfo {pages} {021043} (\bibinfo {year} {2016})}\BibitemShut
  {NoStop}%
\bibitem [{\citenamefont {Bravyi}\ \emph {et~al.}(2019)\citenamefont {Bravyi},
  \citenamefont {Browne}, \citenamefont {Calpin}, \citenamefont {Campbell},
  \citenamefont {Gosset},\ and\ \citenamefont {Howard}}]{bravyi2019simulation}%
  \BibitemOpen
  \bibfield  {author} {\bibinfo {author} {\bibfnamefont {S.}~\bibnamefont
  {Bravyi}}, \bibinfo {author} {\bibfnamefont {D.}~\bibnamefont {Browne}},
  \bibinfo {author} {\bibfnamefont {P.}~\bibnamefont {Calpin}}, \bibinfo
  {author} {\bibfnamefont {E.}~\bibnamefont {Campbell}}, \bibinfo {author}
  {\bibfnamefont {D.}~\bibnamefont {Gosset}},\ and\ \bibinfo {author}
  {\bibfnamefont {M.}~\bibnamefont {Howard}},\ }\bibfield  {title} {\bibinfo
  {title} {Simulation of quantum circuits by low-rank stabilizer
  decompositions},\ }\href {https://doi.org/10.22331/q-2019-09-02-181}
  {\bibfield  {journal} {\bibinfo  {journal} {{Quantum}}\ }\textbf {\bibinfo
  {volume} {3}},\ \bibinfo {pages} {181} (\bibinfo {year} {2019})}\BibitemShut
  {NoStop}%
\bibitem [{\citenamefont {Beverland}\ \emph {et~al.}(2020)\citenamefont
  {Beverland}, \citenamefont {Campbell}, \citenamefont {Howard},\ and\
  \citenamefont {Kliuchnikov}}]{BeverlandQST20}%
  \BibitemOpen
  \bibfield  {author} {\bibinfo {author} {\bibfnamefont {M.}~\bibnamefont
  {Beverland}}, \bibinfo {author} {\bibfnamefont {E.}~\bibnamefont {Campbell}},
  \bibinfo {author} {\bibfnamefont {M.}~\bibnamefont {Howard}},\ and\ \bibinfo
  {author} {\bibfnamefont {V.}~\bibnamefont {Kliuchnikov}},\ }\bibfield
  {title} {\bibinfo {title} {Lower bounds on the non-{Clifford} resources for
  quantum computations},\ }\href {https://doi.org/10.1088/2058-9565/ab8963}
  {\bibfield  {journal} {\bibinfo  {journal} {Quantum Sci. Technol.}\ }\textbf
  {\bibinfo {volume} {5}},\ \bibinfo {pages} {035009} (\bibinfo {year}
  {2020})}\BibitemShut {NoStop}%
\bibitem [{\citenamefont {Seddon}\ \emph {et~al.}(2021)\citenamefont {Seddon},
  \citenamefont {Regula}, \citenamefont {Pashayan}, \citenamefont {Ouyang},\
  and\ \citenamefont {Campbell}}]{SeddonPRXQ21}%
  \BibitemOpen
  \bibfield  {author} {\bibinfo {author} {\bibfnamefont {J.~R.}\ \bibnamefont
  {Seddon}}, \bibinfo {author} {\bibfnamefont {B.}~\bibnamefont {Regula}},
  \bibinfo {author} {\bibfnamefont {H.}~\bibnamefont {Pashayan}}, \bibinfo
  {author} {\bibfnamefont {Y.}~\bibnamefont {Ouyang}},\ and\ \bibinfo {author}
  {\bibfnamefont {E.~T.}\ \bibnamefont {Campbell}},\ }\bibfield  {title}
  {\bibinfo {title} {Quantifying quantum speedups: Improved classical
  simulation from tighter magic monotones},\ }\href
  {https://doi.org/10.1103/PRXQuantum.2.010345} {\bibfield  {journal} {\bibinfo
   {journal} {PRX Quantum}\ }\textbf {\bibinfo {volume} {2}},\ \bibinfo {pages}
  {010345} (\bibinfo {year} {2021})}\BibitemShut {NoStop}%
\bibitem [{\citenamefont {Bu}\ and\ \citenamefont
  {Koh}(2022)}]{bu2022classical}%
  \BibitemOpen
  \bibfield  {author} {\bibinfo {author} {\bibfnamefont {K.}~\bibnamefont
  {Bu}}\ and\ \bibinfo {author} {\bibfnamefont {D.~E.}\ \bibnamefont {Koh}},\
  }\bibfield  {title} {\bibinfo {title} {Classical simulation of quantum
  circuits by half {G}auss sums},\ }\href
  {https://doi.org/10.1007/s00220-022-04320-1} {\bibfield  {journal} {\bibinfo
  {journal} {Commun. Math. Phys.}\ }\textbf {\bibinfo {volume} {390}},\
  \bibinfo {pages} {471} (\bibinfo {year} {2022})}\BibitemShut {NoStop}%
\bibitem [{\citenamefont {Gao}\ and\ \citenamefont
  {Duan}(2018)}]{gao2018efficient}%
  \BibitemOpen
  \bibfield  {author} {\bibinfo {author} {\bibfnamefont {X.}~\bibnamefont
  {Gao}}\ and\ \bibinfo {author} {\bibfnamefont {L.}~\bibnamefont {Duan}},\
  }\bibfield  {title} {\bibinfo {title} {Efficient classical simulation of
  noisy quantum computation},\ }\href {https://arxiv.org/abs/1810.03176}
  {\bibfield  {journal} {\bibinfo  {journal} {arXiv:1810.03176}\ } (\bibinfo
  {year} {2018})}\BibitemShut {NoStop}%
\bibitem [{\citenamefont {Bu}\ and\ \citenamefont {Koh}(2019)}]{Bu19}%
  \BibitemOpen
  \bibfield  {author} {\bibinfo {author} {\bibfnamefont {K.}~\bibnamefont
  {Bu}}\ and\ \bibinfo {author} {\bibfnamefont {D.~E.}\ \bibnamefont {Koh}},\
  }\bibfield  {title} {\bibinfo {title} {Efficient classical simulation of
  {C}lifford circuits with nonstabilizer input states},\ }\href
  {https://doi.org/10.1103/PhysRevLett.123.170502} {\bibfield  {journal}
  {\bibinfo  {journal} {Phys. Rev. Lett.}\ }\textbf {\bibinfo {volume} {123}},\
  \bibinfo {pages} {170502} (\bibinfo {year} {2019})}\BibitemShut {NoStop}%
\bibitem [{\citenamefont {Aharonov}\ \emph {et~al.}(2023)\citenamefont
  {Aharonov}, \citenamefont {Gao}, \citenamefont {Landau}, \citenamefont
  {Liu},\ and\ \citenamefont {Vazirani}}]{UmeshSTOC23}%
  \BibitemOpen
  \bibfield  {author} {\bibinfo {author} {\bibfnamefont {D.}~\bibnamefont
  {Aharonov}}, \bibinfo {author} {\bibfnamefont {X.}~\bibnamefont {Gao}},
  \bibinfo {author} {\bibfnamefont {Z.}~\bibnamefont {Landau}}, \bibinfo
  {author} {\bibfnamefont {Y.}~\bibnamefont {Liu}},\ and\ \bibinfo {author}
  {\bibfnamefont {U.}~\bibnamefont {Vazirani}},\ }\bibfield  {title} {\bibinfo
  {title} {A polynomial-time classical algorithm for noisy random circuit
  sampling},\ }\href {https://doi.org/10.1145/3564246.3585234} {\bibfield
  {journal} {\bibinfo  {journal} {STOC}\ ,\ \bibinfo {pages} {945–957}}
  (\bibinfo {year} {2023})}\BibitemShut {NoStop}%
\bibitem [{\citenamefont {Koh}(2017)}]{koh2015further}%
  \BibitemOpen
  \bibfield  {author} {\bibinfo {author} {\bibfnamefont {D.~E.}\ \bibnamefont
  {Koh}},\ }\bibfield  {title} {\bibinfo {title} {Further extensions of
  {C}lifford circuits and their classical simulation complexities},\ }\href
  {https://doi.org/10.26421/QIC17.3-4} {\bibfield  {journal} {\bibinfo
  {journal} {Quantum Information \& Computation}\ }\textbf {\bibinfo {volume}
  {17}},\ \bibinfo {pages} {0262} (\bibinfo {year} {2017})}\BibitemShut
  {NoStop}%
\bibitem [{\citenamefont {Holevo}\ \emph {et~al.}(1999)\citenamefont {Holevo},
  \citenamefont {Sohma},\ and\ \citenamefont {Hirota}}]{Holevo99}%
  \BibitemOpen
  \bibfield  {author} {\bibinfo {author} {\bibfnamefont {A.~S.}\ \bibnamefont
  {Holevo}}, \bibinfo {author} {\bibfnamefont {M.}~\bibnamefont {Sohma}},\ and\
  \bibinfo {author} {\bibfnamefont {O.}~\bibnamefont {Hirota}},\ }\bibfield
  {title} {\bibinfo {title} {Capacity of quantum {G}aussian channels},\ }\href
  {https://doi.org/10.1103/PhysRevA.59.1820} {\bibfield  {journal} {\bibinfo
  {journal} {Phys. Rev. A}\ }\textbf {\bibinfo {volume} {59}},\ \bibinfo
  {pages} {1820} (\bibinfo {year} {1999})}\BibitemShut {NoStop}%
\bibitem [{\citenamefont {Holevo}\ and\ \citenamefont
  {Werner}(2001)}]{HolevoMutual99}%
  \BibitemOpen
  \bibfield  {author} {\bibinfo {author} {\bibfnamefont {A.~S.}\ \bibnamefont
  {Holevo}}\ and\ \bibinfo {author} {\bibfnamefont {R.~F.}\ \bibnamefont
  {Werner}},\ }\bibfield  {title} {\bibinfo {title} {Evaluating capacities of
  bosonic {G}aussian channels},\ }\href
  {https://doi.org/10.1103/PhysRevA.63.032312} {\bibfield  {journal} {\bibinfo
  {journal} {Phys. Rev. A}\ }\textbf {\bibinfo {volume} {63}},\ \bibinfo
  {pages} {032312} (\bibinfo {year} {2001})}\BibitemShut {NoStop}%
\bibitem [{\citenamefont {Grosshans}\ and\ \citenamefont
  {Cerf}(2004)}]{CerfPRL04}%
  \BibitemOpen
  \bibfield  {author} {\bibinfo {author} {\bibfnamefont {F.}~\bibnamefont
  {Grosshans}}\ and\ \bibinfo {author} {\bibfnamefont {N.~J.}\ \bibnamefont
  {Cerf}},\ }\bibfield  {title} {\bibinfo {title} {Continuous-variable quantum
  cryptography is secure against non-{G}aussian attacks},\ }\href
  {https://doi.org/10.1103/PhysRevLett.92.047905} {\bibfield  {journal}
  {\bibinfo  {journal} {Phys. Rev. Lett.}\ }\textbf {\bibinfo {volume} {92}},\
  \bibinfo {pages} {047905} (\bibinfo {year} {2004})}\BibitemShut {NoStop}%
\bibitem [{\citenamefont {Eisert}\ and\ \citenamefont
  {Wolf}(2005)}]{Eisert2005gaussian}%
  \BibitemOpen
  \bibfield  {author} {\bibinfo {author} {\bibfnamefont {J.}~\bibnamefont
  {Eisert}}\ and\ \bibinfo {author} {\bibfnamefont {M.~M.}\ \bibnamefont
  {Wolf}},\ }\bibfield  {title} {\bibinfo {title} {Gaussian quantum channels},\
  }\href@noop {} {\bibfield  {journal} {\bibinfo  {journal} {arXiv preprint
  quant-ph/0505151}\ } (\bibinfo {year} {2005})}\BibitemShut {NoStop}%
\bibitem [{\citenamefont {Wolf}\ \emph {et~al.}(2006)\citenamefont {Wolf},
  \citenamefont {Giedke},\ and\ \citenamefont {Cirac}}]{WolfPRL06}%
  \BibitemOpen
  \bibfield  {author} {\bibinfo {author} {\bibfnamefont {M.~M.}\ \bibnamefont
  {Wolf}}, \bibinfo {author} {\bibfnamefont {G.}~\bibnamefont {Giedke}},\ and\
  \bibinfo {author} {\bibfnamefont {J.~I.}\ \bibnamefont {Cirac}},\ }\bibfield
  {title} {\bibinfo {title} {Extremality of {G}aussian quantum states},\ }\href
  {https://doi.org/10.1103/PhysRevLett.96.080502} {\bibfield  {journal}
  {\bibinfo  {journal} {Phys. Rev. Lett.}\ }\textbf {\bibinfo {volume} {96}},\
  \bibinfo {pages} {080502} (\bibinfo {year} {2006})}\BibitemShut {NoStop}%
\bibitem [{\citenamefont {Bu}\ \emph {et~al.}(2023{\natexlab{a}})\citenamefont
  {Bu}, \citenamefont {Gu},\ and\ \citenamefont {Jaffe}}]{BGJ23a}%
  \BibitemOpen
  \bibfield  {author} {\bibinfo {author} {\bibfnamefont {K.}~\bibnamefont
  {Bu}}, \bibinfo {author} {\bibfnamefont {W.}~\bibnamefont {Gu}},\ and\
  \bibinfo {author} {\bibfnamefont {A.}~\bibnamefont {Jaffe}},\ }\bibfield
  {title} {\bibinfo {title} {Quantum entropy and central limit theorem},\
  }\href {https://doi.org/10.1073/pnas.2304589120} {\bibfield  {journal}
  {\bibinfo  {journal} {Proceedings of the National Academy of Sciences}\
  }\textbf {\bibinfo {volume} {120}},\ \bibinfo {pages} {e2304589120} (\bibinfo
  {year} {2023}{\natexlab{a}})}\BibitemShut {NoStop}%
\bibitem [{\citenamefont {Bu}\ \emph {et~al.}(2023{\natexlab{b}})\citenamefont
  {Bu}, \citenamefont {Gu},\ and\ \citenamefont {Jaffe}}]{BGJ23b}%
  \BibitemOpen
  \bibfield  {author} {\bibinfo {author} {\bibfnamefont {K.}~\bibnamefont
  {Bu}}, \bibinfo {author} {\bibfnamefont {W.}~\bibnamefont {Gu}},\ and\
  \bibinfo {author} {\bibfnamefont {A.}~\bibnamefont {Jaffe}},\ }\bibfield
  {title} {\bibinfo {title} {Discrete quantum {G}aussians and central limit
  theorem},\ }\href {https://arxiv.org/abs/2302.08423} {\bibfield  {journal}
  {\bibinfo  {journal} {arXiv:2302.08423}\ } (\bibinfo {year}
  {2023}{\natexlab{b}})}\BibitemShut {NoStop}%
\bibitem [{\citenamefont {Bu}\ \emph {et~al.}(2024{\natexlab{a}})\citenamefont
  {Bu}, \citenamefont {Gu},\ and\ \citenamefont {Jaffe}}]{BGJ24a}%
  \BibitemOpen
  \bibfield  {author} {\bibinfo {author} {\bibfnamefont {K.}~\bibnamefont
  {Bu}}, \bibinfo {author} {\bibfnamefont {W.}~\bibnamefont {Gu}},\ and\
  \bibinfo {author} {\bibfnamefont {A.}~\bibnamefont {Jaffe}},\ }\bibfield
  {title} {\bibinfo {title} {Entropic quantum central limit theorem and quantum
  inverse sumset theorem},\ }\href {https://arxiv.org/abs/2401.14385}
  {\bibfield  {journal} {\bibinfo  {journal} {arXiv:2401.14385}\ } (\bibinfo
  {year} {2024}{\natexlab{a}})}\BibitemShut {NoStop}%
\bibitem [{\citenamefont {Liu}\ \emph {et~al.}(2017)\citenamefont {Liu},
  \citenamefont {Wozniakowski},\ and\ \citenamefont {Jaffe}}]{Jaffe17}%
  \BibitemOpen
  \bibfield  {author} {\bibinfo {author} {\bibfnamefont {Z.}~\bibnamefont
  {Liu}}, \bibinfo {author} {\bibfnamefont {A.}~\bibnamefont {Wozniakowski}},\
  and\ \bibinfo {author} {\bibfnamefont {A.~M.}\ \bibnamefont {Jaffe}},\
  }\bibfield  {title} {\bibinfo {title} {Quon 3d language for quantum
  information},\ }\href {https://doi.org/10.1073/pnas.1621345114} {\bibfield
  {journal} {\bibinfo  {journal} {PNAS}\ }\textbf {\bibinfo {volume} {114}},\
  \bibinfo {pages} {2497} (\bibinfo {year} {2017})}\BibitemShut {NoStop}%
\bibitem [{\citenamefont {Gross}(2006)}]{Gross06}%
  \BibitemOpen
  \bibfield  {author} {\bibinfo {author} {\bibfnamefont {D.}~\bibnamefont
  {Gross}},\ }\bibfield  {title} {\bibinfo {title} {Hudson's theorem for
  finite-dimensional quantum systems},\ }\href
  {https://doi.org/10.1063/1.2393152} {\bibfield  {journal} {\bibinfo
  {journal} {J. Math. Phys.}\ }\textbf {\bibinfo {volume} {47}},\ \bibinfo
  {pages} {122107} (\bibinfo {year} {2006})}\BibitemShut {NoStop}%
\bibitem [{\citenamefont {Montanaro}\ and\ \citenamefont
  {Osborne}(2010)}]{montanaro2010quantum}%
  \BibitemOpen
  \bibfield  {author} {\bibinfo {author} {\bibfnamefont {A.}~\bibnamefont
  {Montanaro}}\ and\ \bibinfo {author} {\bibfnamefont {T.~J.}\ \bibnamefont
  {Osborne}},\ }\bibfield  {title} {\bibinfo {title} {Quantum {B}oolean
  functions},\ }\href@noop {} {\bibfield  {journal} {\bibinfo  {journal}
  {Chicago Journal of Theoretical Computer Science}\ }\textbf {\bibinfo
  {volume} {2010}} (\bibinfo {year} {2010})}\BibitemShut {NoStop}%
\bibitem [{\citenamefont {Bu}\ \emph {et~al.}(2022{\natexlab{a}})\citenamefont
  {Bu}, \citenamefont {Garcia}, \citenamefont {Jaffe}, \citenamefont {Koh},\
  and\ \citenamefont {Li}}]{Bucomplexity22}%
  \BibitemOpen
  \bibfield  {author} {\bibinfo {author} {\bibfnamefont {K.}~\bibnamefont
  {Bu}}, \bibinfo {author} {\bibfnamefont {R.~J.}\ \bibnamefont {Garcia}},
  \bibinfo {author} {\bibfnamefont {A.}~\bibnamefont {Jaffe}}, \bibinfo
  {author} {\bibfnamefont {D.~E.}\ \bibnamefont {Koh}},\ and\ \bibinfo {author}
  {\bibfnamefont {L.}~\bibnamefont {Li}},\ }\bibfield  {title} {\bibinfo
  {title} {Complexity of quantum circuits via sensitivity, magic, and
  coherence},\ }\href {https://doi.org/10.48550/arXiv.2204.12051} {\bibfield
  {journal} {\bibinfo  {journal} {arXiv:2204.12051}\ } (\bibinfo {year}
  {2022}{\natexlab{a}})}\BibitemShut {NoStop}%
\bibitem [{\citenamefont {Garcia}\ \emph {et~al.}(2023)\citenamefont {Garcia},
  \citenamefont {Bu},\ and\ \citenamefont {Jaffe}}]{GBJPNAS23}%
  \BibitemOpen
  \bibfield  {author} {\bibinfo {author} {\bibfnamefont {R.~J.}\ \bibnamefont
  {Garcia}}, \bibinfo {author} {\bibfnamefont {K.}~\bibnamefont {Bu}},\ and\
  \bibinfo {author} {\bibfnamefont {A.}~\bibnamefont {Jaffe}},\ }\bibfield
  {title} {\bibinfo {title} {Resource theory of quantum scrambling},\ }\href
  {https://doi.org/10.1073/pnas.2217031120} {\bibfield  {journal} {\bibinfo
  {journal} {Proceedings of the National Academy of Sciences}\ }\textbf
  {\bibinfo {volume} {120}},\ \bibinfo {pages} {e2217031120} (\bibinfo {year}
  {2023})}\BibitemShut {NoStop}%
\bibitem [{\citenamefont {Bu}\ \emph {et~al.}(2022{\natexlab{b}})\citenamefont
  {Bu}, \citenamefont {Koh}, \citenamefont {Li}, \citenamefont {Luo},\ and\
  \citenamefont {Zhang}}]{BuPRA19_stat}%
  \BibitemOpen
  \bibfield  {author} {\bibinfo {author} {\bibfnamefont {K.}~\bibnamefont
  {Bu}}, \bibinfo {author} {\bibfnamefont {D.~E.}\ \bibnamefont {Koh}},
  \bibinfo {author} {\bibfnamefont {L.}~\bibnamefont {Li}}, \bibinfo {author}
  {\bibfnamefont {Q.}~\bibnamefont {Luo}},\ and\ \bibinfo {author}
  {\bibfnamefont {Y.}~\bibnamefont {Zhang}},\ }\bibfield  {title} {\bibinfo
  {title} {Statistical complexity of quantum circuits},\ }\href
  {https://doi.org/10.1103/PhysRevA.105.062431} {\bibfield  {journal} {\bibinfo
   {journal} {Phys. Rev. A}\ }\textbf {\bibinfo {volume} {105}},\ \bibinfo
  {pages} {062431} (\bibinfo {year} {2022}{\natexlab{b}})}\BibitemShut
  {NoStop}%
\bibitem [{\citenamefont {Bu}\ \emph {et~al.}(2024{\natexlab{b}})\citenamefont
  {Bu}, \citenamefont {Koh}, \citenamefont {Garcia},\ and\ \citenamefont
  {Jaffe}}]{Bunpj22}%
  \BibitemOpen
  \bibfield  {author} {\bibinfo {author} {\bibfnamefont {K.}~\bibnamefont
  {Bu}}, \bibinfo {author} {\bibfnamefont {D.~E.}\ \bibnamefont {Koh}},
  \bibinfo {author} {\bibfnamefont {R.}~\bibnamefont {Garcia}},\ and\ \bibinfo
  {author} {\bibfnamefont {A.}~\bibnamefont {Jaffe}},\ }\bibfield  {title}
  {\bibinfo {title} {Classical shadows with pauli-invariant unitary
  ensembles},\ }\href {https://doi.org/10.1038/s41534-023-00801-w} {\bibfield
  {journal} {\bibinfo  {journal} {npj Quantum Information}\ }\textbf {\bibinfo
  {volume} {10}},\ \bibinfo {pages} {6} (\bibinfo {year}
  {2024}{\natexlab{b}})}\BibitemShut {NoStop}%
\bibitem [{\citenamefont {Renner}(2005)}]{Renner05}%
  \BibitemOpen
  \bibfield  {author} {\bibinfo {author} {\bibfnamefont {R.}~\bibnamefont
  {Renner}},\ }\bibfield  {title} {\bibinfo {title} {Security of quantum key
  distribution},\ }\href {https://arxiv.org/abs/quant-ph/0512258} {\bibfield
  {journal} {\bibinfo  {journal} {arXiv:quant-ph/0512258}\ } (\bibinfo {year}
  {2005})}\BibitemShut {NoStop}%
\bibitem [{\citenamefont {Heisenberg}(1927)}]{Heisenberg1927anschaulichen}%
  \BibitemOpen
  \bibfield  {author} {\bibinfo {author} {\bibfnamefont {W.}~\bibnamefont
  {Heisenberg}},\ }\bibfield  {title} {\bibinfo {title} {{\"U}ber den
  anschaulichen inhalt der quantentheoretischen kinematik und mechanik},\
  }\href {https://doi.org/10.1007/BF01397280} {\bibfield  {journal} {\bibinfo
  {journal} {Zeitschrift f{\"u}r Physik}\ }\textbf {\bibinfo {volume} {43}},\
  \bibinfo {pages} {172} (\bibinfo {year} {1927})}\BibitemShut {NoStop}%
\bibitem [{\citenamefont {Coles}\ \emph {et~al.}(2017)\citenamefont {Coles},
  \citenamefont {Berta}, \citenamefont {Tomamichel},\ and\ \citenamefont
  {Wehner}}]{ColesPRMP17}%
  \BibitemOpen
  \bibfield  {author} {\bibinfo {author} {\bibfnamefont {P.~J.}\ \bibnamefont
  {Coles}}, \bibinfo {author} {\bibfnamefont {M.}~\bibnamefont {Berta}},
  \bibinfo {author} {\bibfnamefont {M.}~\bibnamefont {Tomamichel}},\ and\
  \bibinfo {author} {\bibfnamefont {S.}~\bibnamefont {Wehner}},\ }\bibfield
  {title} {\bibinfo {title} {Entropic uncertainty relations and their
  applications},\ }\href {https://doi.org/10.1103/RevModPhys.89.015002}
  {\bibfield  {journal} {\bibinfo  {journal} {Rev. Mod. Phys.}\ }\textbf
  {\bibinfo {volume} {89}},\ \bibinfo {pages} {015002} (\bibinfo {year}
  {2017})}\BibitemShut {NoStop}%
\bibitem [{\citenamefont {Cerf}\ and\ \citenamefont {Adami}(1997)}]{Cerf97}%
  \BibitemOpen
  \bibfield  {author} {\bibinfo {author} {\bibfnamefont {N.~J.}\ \bibnamefont
  {Cerf}}\ and\ \bibinfo {author} {\bibfnamefont {C.}~\bibnamefont {Adami}},\
  }\bibfield  {title} {\bibinfo {title} {Negative entropy and information in
  quantum mechanics},\ }\href {https://doi.org/10.1103/PhysRevLett.79.5194}
  {\bibfield  {journal} {\bibinfo  {journal} {Phys. Rev. Lett.}\ }\textbf
  {\bibinfo {volume} {79}},\ \bibinfo {pages} {5194} (\bibinfo {year}
  {1997})}\BibitemShut {NoStop}%
\bibitem [{\citenamefont {Konig}\ \emph {et~al.}(2009)\citenamefont {Konig},
  \citenamefont {Renner},\ and\ \citenamefont
  {Schaffner}}]{Konig2009operational}%
  \BibitemOpen
  \bibfield  {author} {\bibinfo {author} {\bibfnamefont {R.}~\bibnamefont
  {Konig}}, \bibinfo {author} {\bibfnamefont {R.}~\bibnamefont {Renner}},\ and\
  \bibinfo {author} {\bibfnamefont {C.}~\bibnamefont {Schaffner}},\ }\bibfield
  {title} {\bibinfo {title} {The operational meaning of min-and max-entropy},\
  }\href {https://doi.org/10.1109/TIT.2009.2025545} {\bibfield  {journal}
  {\bibinfo  {journal} {IEEE Transactions on Information theory}\ }\textbf
  {\bibinfo {volume} {55}},\ \bibinfo {pages} {4337} (\bibinfo {year}
  {2009})}\BibitemShut {NoStop}%
\bibitem [{\citenamefont {Horodecki}\ \emph {et~al.}(2005)\citenamefont
  {Horodecki}, \citenamefont {Oppenheim},\ and\ \citenamefont
  {Winter}}]{Horodecki2005partial}%
  \BibitemOpen
  \bibfield  {author} {\bibinfo {author} {\bibfnamefont {M.}~\bibnamefont
  {Horodecki}}, \bibinfo {author} {\bibfnamefont {J.}~\bibnamefont
  {Oppenheim}},\ and\ \bibinfo {author} {\bibfnamefont {A.}~\bibnamefont
  {Winter}},\ }\bibfield  {title} {\bibinfo {title} {Partial quantum
  information},\ }\href {https://doi.org/10.1038/nature03909} {\bibfield
  {journal} {\bibinfo  {journal} {Nature}\ }\textbf {\bibinfo {volume} {436}},\
  \bibinfo {pages} {673} (\bibinfo {year} {2005})}\BibitemShut {NoStop}%
\bibitem [{\citenamefont {Rio}\ \emph {et~al.}(2011)\citenamefont {Rio},
  \citenamefont {{\AA}berg}, \citenamefont {Renner}, \citenamefont {Dahlsten},\
  and\ \citenamefont {Vedral}}]{Rio2011thermodynamic}%
  \BibitemOpen
  \bibfield  {author} {\bibinfo {author} {\bibfnamefont {L.~d.}\ \bibnamefont
  {Rio}}, \bibinfo {author} {\bibfnamefont {J.}~\bibnamefont {{\AA}berg}},
  \bibinfo {author} {\bibfnamefont {R.}~\bibnamefont {Renner}}, \bibinfo
  {author} {\bibfnamefont {O.}~\bibnamefont {Dahlsten}},\ and\ \bibinfo
  {author} {\bibfnamefont {V.}~\bibnamefont {Vedral}},\ }\bibfield  {title}
  {\bibinfo {title} {The thermodynamic meaning of negative entropy},\ }\href
  {https://doi.org/10.1038/nature10123} {\bibfield  {journal} {\bibinfo
  {journal} {Nature}\ }\textbf {\bibinfo {volume} {474}},\ \bibinfo {pages}
  {61} (\bibinfo {year} {2011})}\BibitemShut {NoStop}%
\bibitem [{\citenamefont {Brown}\ \emph {et~al.}(2021)\citenamefont {Brown},
  \citenamefont {Fawzi},\ and\ \citenamefont {Fawzi}}]{Brown2021computing}%
  \BibitemOpen
  \bibfield  {author} {\bibinfo {author} {\bibfnamefont {P.}~\bibnamefont
  {Brown}}, \bibinfo {author} {\bibfnamefont {H.}~\bibnamefont {Fawzi}},\ and\
  \bibinfo {author} {\bibfnamefont {O.}~\bibnamefont {Fawzi}},\ }\bibfield
  {title} {\bibinfo {title} {Computing conditional entropies for quantum
  correlations},\ }\href {https://doi.org/10.1038/s41467-020-20018-1}
  {\bibfield  {journal} {\bibinfo  {journal} {Nature communications}\ }\textbf
  {\bibinfo {volume} {12}},\ \bibinfo {pages} {575} (\bibinfo {year}
  {2021})}\BibitemShut {NoStop}%
\bibitem [{\citenamefont {Tomamichel}(2015)}]{Tomamichel2015quantum1}%
  \BibitemOpen
  \bibfield  {author} {\bibinfo {author} {\bibfnamefont {M.}~\bibnamefont
  {Tomamichel}},\ }\href {https://doi.org/10.1007/978-3-319-21891-5} {\bibinfo
  {title} {Quantum information processing with finite resources: mathematical
  foundations}} (\bibinfo {year} {2015})\BibitemShut {NoStop}%
\bibitem [{\citenamefont {Bennett}\ \emph {et~al.}(1996)\citenamefont
  {Bennett}, \citenamefont {DiVincenzo}, \citenamefont {Smolin},\ and\
  \citenamefont {Wootters}}]{BennettPRA96}%
  \BibitemOpen
  \bibfield  {author} {\bibinfo {author} {\bibfnamefont {C.~H.}\ \bibnamefont
  {Bennett}}, \bibinfo {author} {\bibfnamefont {D.~P.}\ \bibnamefont
  {DiVincenzo}}, \bibinfo {author} {\bibfnamefont {J.~A.}\ \bibnamefont
  {Smolin}},\ and\ \bibinfo {author} {\bibfnamefont {W.~K.}\ \bibnamefont
  {Wootters}},\ }\bibfield  {title} {\bibinfo {title} {Mixed-state entanglement
  and quantum error correction},\ }\href
  {https://doi.org/10.1103/PhysRevA.54.3824} {\bibfield  {journal} {\bibinfo
  {journal} {Phys. Rev. A}\ }\textbf {\bibinfo {volume} {54}},\ \bibinfo
  {pages} {3824} (\bibinfo {year} {1996})}\BibitemShut {NoStop}%
\bibitem [{\citenamefont {\ifmmode~\dot{Z}\else \.{Z}\fi{}yczkowski}\ \emph
  {et~al.}(1998)\citenamefont {\ifmmode~\dot{Z}\else \.{Z}\fi{}yczkowski},
  \citenamefont {Horodecki}, \citenamefont {Sanpera},\ and\ \citenamefont
  {Lewenstein}}]{Zyczkowski98}%
  \BibitemOpen
  \bibfield  {author} {\bibinfo {author} {\bibfnamefont {K.}~\bibnamefont
  {\ifmmode~\dot{Z}\else \.{Z}\fi{}yczkowski}}, \bibinfo {author}
  {\bibfnamefont {P.}~\bibnamefont {Horodecki}}, \bibinfo {author}
  {\bibfnamefont {A.}~\bibnamefont {Sanpera}},\ and\ \bibinfo {author}
  {\bibfnamefont {M.}~\bibnamefont {Lewenstein}},\ }\bibfield  {title}
  {\bibinfo {title} {Volume of the set of separable states},\ }\href
  {https://doi.org/10.1103/PhysRevA.58.883} {\bibfield  {journal} {\bibinfo
  {journal} {Phys. Rev. A}\ }\textbf {\bibinfo {volume} {58}},\ \bibinfo
  {pages} {883} (\bibinfo {year} {1998})}\BibitemShut {NoStop}%
\bibitem [{\citenamefont {Vidal}\ and\ \citenamefont {Werner}(2002)}]{Vidal02}%
  \BibitemOpen
  \bibfield  {author} {\bibinfo {author} {\bibfnamefont {G.}~\bibnamefont
  {Vidal}}\ and\ \bibinfo {author} {\bibfnamefont {R.~F.}\ \bibnamefont
  {Werner}},\ }\bibfield  {title} {\bibinfo {title} {Computable measure of
  entanglement},\ }\href {https://doi.org/10.1103/PhysRevA.65.032314}
  {\bibfield  {journal} {\bibinfo  {journal} {Phys. Rev. A}\ }\textbf {\bibinfo
  {volume} {65}},\ \bibinfo {pages} {032314} (\bibinfo {year}
  {2002})}\BibitemShut {NoStop}%
\bibitem [{\citenamefont {Christandl}\ and\ \citenamefont
  {Winter}(2004)}]{Christandl2004squashed}%
  \BibitemOpen
  \bibfield  {author} {\bibinfo {author} {\bibfnamefont {M.}~\bibnamefont
  {Christandl}}\ and\ \bibinfo {author} {\bibfnamefont {A.}~\bibnamefont
  {Winter}},\ }\bibfield  {title} {\bibinfo {title} {“squashed
  entanglement”: an additive entanglement measure},\ }\href
  {https://doi.org/10.1063/1.1643788} {\bibfield  {journal} {\bibinfo
  {journal} {Journal of mathematical physics}\ }\textbf {\bibinfo {volume}
  {45}},\ \bibinfo {pages} {829} (\bibinfo {year} {2004})}\BibitemShut
  {NoStop}%
\bibitem [{\citenamefont {Bu}\ \emph {et~al.}(2023{\natexlab{c}})\citenamefont
  {Bu}, \citenamefont {Gu},\ and\ \citenamefont {Jaffe}}]{BGJ23c}%
  \BibitemOpen
  \bibfield  {author} {\bibinfo {author} {\bibfnamefont {K.}~\bibnamefont
  {Bu}}, \bibinfo {author} {\bibfnamefont {W.}~\bibnamefont {Gu}},\ and\
  \bibinfo {author} {\bibfnamefont {A.}~\bibnamefont {Jaffe}},\ }\bibfield
  {title} {\bibinfo {title} {Stabilizer testing and magic entropy},\ }\href
  {https://arxiv.org/abs/2306.09292} {\bibfield  {journal} {\bibinfo  {journal}
  {arXiv:2306.09292}\ } (\bibinfo {year} {2023}{\natexlab{c}})}\BibitemShut
  {NoStop}%
\bibitem [{\citenamefont {Bu}\ and\ \citenamefont {Jaffe}(2024)}]{BJ24a}%
  \BibitemOpen
  \bibfield  {author} {\bibinfo {author} {\bibfnamefont {K.}~\bibnamefont
  {Bu}}\ and\ \bibinfo {author} {\bibfnamefont {A.}~\bibnamefont {Jaffe}},\
  }\bibfield  {title} {\bibinfo {title} {Magic can enhance the quantum capacity
  of channels},\ }\href {https://arxiv.org/abs/2401.12105} {\bibfield
  {journal} {\bibinfo  {journal} {arXiv:2401.12105}\ } (\bibinfo {year}
  {2024})}\BibitemShut {NoStop}%
\bibitem [{\citenamefont {Bu}\ \emph {et~al.}(2024{\natexlab{c}})\citenamefont
  {Bu}, \citenamefont {Jaffe},\ and\ \citenamefont {Wei}}]{BJW24a}%
  \BibitemOpen
  \bibfield  {author} {\bibinfo {author} {\bibfnamefont {K.}~\bibnamefont
  {Bu}}, \bibinfo {author} {\bibfnamefont {A.}~\bibnamefont {Jaffe}},\ and\
  \bibinfo {author} {\bibfnamefont {Z.}~\bibnamefont {Wei}},\ }\bibfield
  {title} {\bibinfo {title} {Magic class and the convolution group},\ }\href
  {https://arxiv.org/abs/2402.05780} {\bibfield  {journal} {\bibinfo  {journal}
  {arXiv:2402.05780}\ } (\bibinfo {year} {2024}{\natexlab{c}})}\BibitemShut
  {NoStop}%
\bibitem [{\citenamefont
  {Zamolodchikov}(1986)}]{Zamolodchikov1986irreversibility}%
  \BibitemOpen
  \bibfield  {author} {\bibinfo {author} {\bibfnamefont {A.~B.}\ \bibnamefont
  {Zamolodchikov}},\ }\bibfield  {title} {\bibinfo {title} {Irreversibility of
  the flux of the renormalization group in a 2d field theory},\ }\href@noop {}
  {\bibfield  {journal} {\bibinfo  {journal} {JETP lett}\ }\textbf {\bibinfo
  {volume} {43}},\ \bibinfo {pages} {730} (\bibinfo {year} {1986})}\BibitemShut
  {NoStop}%
\bibitem [{\citenamefont {Cardy}(1988)}]{Cardy1988there}%
  \BibitemOpen
  \bibfield  {author} {\bibinfo {author} {\bibfnamefont {J.~L.}\ \bibnamefont
  {Cardy}},\ }\bibfield  {title} {\bibinfo {title} {Is there a c-theorem in
  four dimensions?},\ }\href {https://doi.org/10.1016/0370-2693(88)90054-8}
  {\bibfield  {journal} {\bibinfo  {journal} {Physics Letters B}\ }\textbf
  {\bibinfo {volume} {215}},\ \bibinfo {pages} {749} (\bibinfo {year}
  {1988})}\BibitemShut {NoStop}%
\bibitem [{\citenamefont {Osborn}(1989)}]{Osborn1989derivation}%
  \BibitemOpen
  \bibfield  {author} {\bibinfo {author} {\bibfnamefont {H.}~\bibnamefont
  {Osborn}},\ }\bibfield  {title} {\bibinfo {title} {Derivation of a four
  dimensional c-theorem for renormaliseable quantum field theories},\ }\href
  {https://doi.org/10.1016/0370-2693(89)90729-6} {\bibfield  {journal}
  {\bibinfo  {journal} {Physics Letters B}\ }\textbf {\bibinfo {volume}
  {222}},\ \bibinfo {pages} {97} (\bibinfo {year} {1989})}\BibitemShut
  {NoStop}%
\bibitem [{\citenamefont {Komargodski}\ and\ \citenamefont
  {Schwimmer}(2011)}]{Komargodski2011renormalization}%
  \BibitemOpen
  \bibfield  {author} {\bibinfo {author} {\bibfnamefont {Z.}~\bibnamefont
  {Komargodski}}\ and\ \bibinfo {author} {\bibfnamefont {A.}~\bibnamefont
  {Schwimmer}},\ }\bibfield  {title} {\bibinfo {title} {On renormalization
  group flows in four dimensions},\ }\href
  {https://doi.org/10.1007/JHEP12(2011)099} {\bibfield  {journal} {\bibinfo
  {journal} {Journal of High Energy Physics}\ }\textbf {\bibinfo {volume}
  {2011}},\ \bibinfo {pages} {1} (\bibinfo {year} {2011})}\BibitemShut
  {NoStop}%
\bibitem [{\citenamefont {Nishioka}(2018)}]{NishiokaRMP18}%
  \BibitemOpen
  \bibfield  {author} {\bibinfo {author} {\bibfnamefont {T.}~\bibnamefont
  {Nishioka}},\ }\bibfield  {title} {\bibinfo {title} {Entanglement entropy:
  Holography and renormalization group},\ }\href
  {https://doi.org/10.1103/RevModPhys.90.035007} {\bibfield  {journal}
  {\bibinfo  {journal} {Rev. Mod. Phys.}\ }\textbf {\bibinfo {volume} {90}},\
  \bibinfo {pages} {035007} (\bibinfo {year} {2018})}\BibitemShut {NoStop}%
\bibitem [{\citenamefont {Berta}\ \emph {et~al.}(2010)\citenamefont {Berta},
  \citenamefont {Christandl}, \citenamefont {Colbeck}, \citenamefont {Renes},\
  and\ \citenamefont {Renner}}]{Berta2010uncertainty}%
  \BibitemOpen
  \bibfield  {author} {\bibinfo {author} {\bibfnamefont {M.}~\bibnamefont
  {Berta}}, \bibinfo {author} {\bibfnamefont {M.}~\bibnamefont {Christandl}},
  \bibinfo {author} {\bibfnamefont {R.}~\bibnamefont {Colbeck}}, \bibinfo
  {author} {\bibfnamefont {J.~M.}\ \bibnamefont {Renes}},\ and\ \bibinfo
  {author} {\bibfnamefont {R.}~\bibnamefont {Renner}},\ }\bibfield  {title}
  {\bibinfo {title} {The uncertainty principle in the presence of quantum
  memory},\ }\href {https://doi.org/10.1038/nphys1734} {\bibfield  {journal}
  {\bibinfo  {journal} {Nature Physics}\ }\textbf {\bibinfo {volume} {6}},\
  \bibinfo {pages} {659} (\bibinfo {year} {2010})}\BibitemShut {NoStop}%
\bibitem [{\citenamefont {Choi}(1975)}]{Choi75}%
  \BibitemOpen
  \bibfield  {author} {\bibinfo {author} {\bibfnamefont {M.-D.}\ \bibnamefont
  {Choi}},\ }\bibfield  {title} {\bibinfo {title} {Completely positive linear
  maps on complex matrices},\ }\href
  {https://doi.org/10.1016/0024-3795(75)90075-0} {\bibfield  {journal}
  {\bibinfo  {journal} {Linear Algebra and its Application}\ }\textbf {\bibinfo
  {volume} {10}},\ \bibinfo {pages} {285} (\bibinfo {year} {1975})}\BibitemShut
  {NoStop}%
\bibitem [{\citenamefont {Jamiołkowski}(1972)}]{Jamio72}%
  \BibitemOpen
  \bibfield  {author} {\bibinfo {author} {\bibfnamefont {A.}~\bibnamefont
  {Jamiołkowski}},\ }\bibfield  {title} {\bibinfo {title} {Linear
  transformations which preserve trace and positive semidefiniteness of
  operators},\ }\href {https://doi.org/10.1016/0034-4877(72)90011-0} {\bibfield
   {journal} {\bibinfo  {journal} {Rep. Math. Phys.}\ }\textbf {\bibinfo
  {volume} {3}},\ \bibinfo {pages} {275–278} (\bibinfo {year}
  {1972})}\BibitemShut {NoStop}%
\bibitem [{\citenamefont {Tao}(2003)}]{Tao2003uncertainty}%
  \BibitemOpen
  \bibfield  {author} {\bibinfo {author} {\bibfnamefont {T.}~\bibnamefont
  {Tao}},\ }\bibfield  {title} {\bibinfo {title} {An uncertainty principle for
  cyclic groups of prime order},\ }\href@noop {} {\bibfield  {journal}
  {\bibinfo  {journal} {arXiv preprint math/0308286}\ } (\bibinfo {year}
  {2003})}\BibitemShut {NoStop}%
\bibitem [{\citenamefont {Donoho}\ and\ \citenamefont
  {Stark}(1989)}]{Donoho1989uncertainty}%
  \BibitemOpen
  \bibfield  {author} {\bibinfo {author} {\bibfnamefont {D.~L.}\ \bibnamefont
  {Donoho}}\ and\ \bibinfo {author} {\bibfnamefont {P.~B.}\ \bibnamefont
  {Stark}},\ }\bibfield  {title} {\bibinfo {title} {Uncertainty principles and
  signal recovery},\ }\href {https://doi.org/10.1137/014905} {\bibfield
  {journal} {\bibinfo  {journal} {SIAM Journal on Applied Mathematics}\
  }\textbf {\bibinfo {volume} {49}},\ \bibinfo {pages} {906} (\bibinfo {year}
  {1989})}\BibitemShut {NoStop}%
\bibitem [{\citenamefont {Smith}(1990)}]{Smith1990uncertainty}%
  \BibitemOpen
  \bibfield  {author} {\bibinfo {author} {\bibfnamefont {K.~T.}\ \bibnamefont
  {Smith}},\ }\bibfield  {title} {\bibinfo {title} {The uncertainty principle
  on groups},\ }\href {https://doi.org/10.1137/0150051} {\bibfield  {journal}
  {\bibinfo  {journal} {SIAM Journal on Applied Mathematics}\ }\textbf
  {\bibinfo {volume} {50}},\ \bibinfo {pages} {876} (\bibinfo {year}
  {1990})}\BibitemShut {NoStop}%
\bibitem [{\citenamefont {Jiang}\ \emph {et~al.}(2016)\citenamefont {Jiang},
  \citenamefont {Liu},\ and\ \citenamefont {Wu}}]{Jiang2016noncommutative}%
  \BibitemOpen
  \bibfield  {author} {\bibinfo {author} {\bibfnamefont {C.}~\bibnamefont
  {Jiang}}, \bibinfo {author} {\bibfnamefont {Z.}~\bibnamefont {Liu}},\ and\
  \bibinfo {author} {\bibfnamefont {J.}~\bibnamefont {Wu}},\ }\bibfield
  {title} {\bibinfo {title} {Noncommutative uncertainty principles},\ }\href
  {https://doi.org/j.jfa.2015.08.007} {\bibfield  {journal} {\bibinfo
  {journal} {Journal of Functional Analysis}\ }\textbf {\bibinfo {volume}
  {270}},\ \bibinfo {pages} {264} (\bibinfo {year} {2016})}\BibitemShut
  {NoStop}%
\end{thebibliography}%
\clearpage
\newpage
\onecolumngrid
\appendix
\section{Extremality of stabilizer states in uncertainty principles}\label{appen:prop_con}
In this section, we establish three uncertainty principles corresponding to  three different,
Fourier transformations, as illustrated in the Figure \ref{fig:FT}, where the 
stabilizer states are the only extremizers.  

\begin{figure}
    \includegraphics[width=8.0cm]{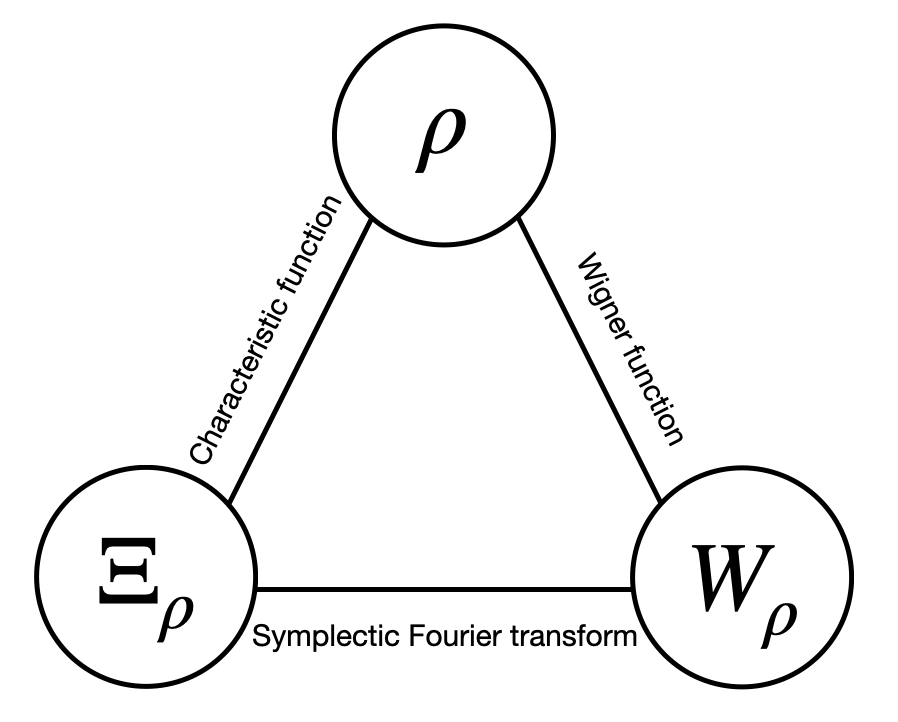}
    \caption{We consider three different Fourier transforms: (1) using the characteristic function as the Fourier coefficients (2) using the Wigner function 
    as the Fourier coefficients (3) applying the symplectic Fourier transform which will map the characteristic function to the Wigner function.}
    \label{fig:FT}
\end{figure}

\begin{thm}[Restatement of the uncertainty principle for Pauli rank]
    Given an $n$-qudit state $\rho$, we have 
    \begin{eqnarray}
S_{\max}(\rho)+\log \chi_P(\rho)
        \geq n\log d,
    \end{eqnarray}
    and the equality holds iff $\rho$ is a stabilizer state, i.e.,  $\rho=\mathcal{M}(\rho)$.
\end{thm}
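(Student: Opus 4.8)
The plan is to pass to the equivalent multiplicative bound $\text{Rank}(\rho)\cdot\chi_P(\rho)\geq d^n$ (take logarithms, using $S_{\max}(\rho)=\log\text{Rank}(\rho)$), and to prove it by sandwiching the purity $\Tr{\rho^2}$ between a Fourier-analytic upper bound in terms of $\chi_P(\rho)$ and an elementary lower bound in terms of $\text{Rank}(\rho)$.

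For the upper bound I would use that $\{w(\vec x)\}_{\vec x\in V^n}$ is an orthonormal basis for the inner product $\inner{A}{B}=\frac1{d^n}\Tr{A^\dag B}$, so Parseval's identity gives $\sum_{\vec x\in V^n}\abs{\Xi_\rho(\vec x)}^2=d^n\Tr{\rho^2}$. Since $\norm{w(\vec x)}_\infty=1$ and $\norm{\rho}_1=1$ we have $\abs{\Xi_\rho(\vec x)}\leq 1$ for every $\vec x$, hence $\sum_{\vec x}\abs{\Xi_\rho(\vec x)}^2\leq\abs{\text{Supp}(\Xi_\rho)}=\chi_P(\rho)$ and therefore $d^n\Tr{\rho^2}\leq\chi_P(\rho)$. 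For the lower bound, writing the nonzero eigenvalues of $\rho$ as $\lambda_1,\dots,\lambda_r$ with $r=\text{Rank}(\rho)$, Cauchy--Schwarz gives $1=\big(\sum_i\lambda_i\big)^2\leq r\sum_i\lambda_i^2=r\Tr{\rho^2}$, i.e. $\Tr{\rho^2}\geq 1/\text{Rank}(\rho)$. Combining, $d^n\leq d^n\,\text{Rank}(\rho)\,\Tr{\rho^2}\leq\text{Rank}(\rho)\,\chi_P(\rho)$, which is the claimed inequality; note nothing here uses primality of $d$, consistent with the statement holding for all $d\geq 2$.

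For the equality case, equality forces both steps to be tight. Tightness of Cauchy--Schwarz forces all nonzero eigenvalues to be equal, so $\rho$ is $1/r$ times a rank-$r$ projector; tightness of $\sum_{\vec x}\abs{\Xi_\rho(\vec x)}^2\leq\chi_P(\rho)$ forces $\abs{\Xi_\rho(\vec x)}\in\{0,1\}$ for all $\vec x$. Comparing the latter with the definition~\eqref{0109shi6} of $\mathcal M(\rho)$ shows $\Xi_{\mathcal M(\rho)}=\Xi_\rho$ pointwise, hence $\rho=\mathcal M(\rho)$, and since $\mathcal M(\rho)$ is always a stabilizer state we conclude $\rho\in\text{STAB}$. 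For the converse I would take $\rho=\frac1{d^{n-r}}\prod_{i=1}^r\mathbb{E}_{k_i\in\mathbb{Z}_d}g_i^{k_i}$: the operator $\prod_i\mathbb{E}_{k_i}g_i^{k_i}$ is the projector onto the common eigenspace of the order-$d^r$ stabilizer group, of dimension $d^{n-r}$, so $\text{Rank}(\rho)=d^{n-r}$; meanwhile $\Xi_\rho$ is supported exactly on the $d^r$ phase-space points labelling the elements of that group, each of modulus $1$, so $\chi_P(\rho)=d^r$, and the product is $d^n$.

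The two displayed inequalities are routine. The part that needs care is the equality analysis, in particular the implication ``$\abs{\Xi_\rho(\vec x)}\in\{0,1\}$ for all $\vec x$ $\Rightarrow$ $\rho\in\text{STAB}$''; this leans on the structural facts---quoted from Lemma 12 of~\cite{BGJ23b} and the surrounding discussion---that $G_\rho=\{w(\vec x):\abs{\Xi_\rho(\vec x)}=1\}$ is a commuting Pauli subgroup and that $\mathcal M(\rho)$ is a stabilizer state, together with the characterization $\rho\in\text{STAB}\iff\rho=\mathcal M(\rho)$. On the converse side the only subtlety is correctly bookkeeping the rank of a (possibly mixed) stabilizer state against the cardinality of its stabilizer group.
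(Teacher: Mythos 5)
Your proposal is correct and follows essentially the same route as the paper's proof: Parseval's identity plus $\abs{\Xi_\rho(\vec x)}\leq 1$ to bound $d^n\Tr{\rho^2}$ by $\chi_P(\rho)$, Cauchy--Schwarz on the eigenvalues to bound $\Tr{\rho^2}$ below by $1/\text{Rank}(\rho)$, and the observation that tightness forces $\abs{\Xi_\rho(\vec x)}\in\{0,1\}$, hence $\rho=\mathcal{M}(\rho)$. Your explicit rank/support count for the converse ($\text{Rank}=d^{n-r}$, $\chi_P=d^r$) is a welcome elaboration of what the paper leaves as ``straightforward to verify.''
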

\begin{proof}
    The proof of the inequality is similar to the method in the work~\cite{Tao2003uncertainty,Donoho1989uncertainty,Smith1990uncertainty,Jiang2016noncommutative}. Here, by
    the Cauchy-Schwarz inequality, we have
\begin{eqnarray}
  1\leq \left(\Tr{\rho^2}\right)^{1/2}\sqrt{\text{Rank}(\rho)}
=\left(\frac{1}{d^n}\sum_{\vec x\in V^n}|\Xi_{\rho}(\vec x)|^2\right)^{1/2}\sqrt{\text{Rank}(\rho)}
\leq \left(\frac{1}{d^n}\chi_P(\rho)\right)^{1/2}\sqrt{\text{Rank}(\rho)},
\end{eqnarray}
where the equality comes from the Parseval's identity $\frac{1}{d^n}\sum_{\vec x\in V^n}|\Xi_{\rho}(\vec x)|^2=\Tr{\rho^2}$,  and the last
inequality comes from the fact that $|\Xi_{\rho}(\vec x)|\leq 1$.

Hence, the equality implies that the $|\Xi_{\rho}(\vec x)|=1$ for any $\vec x\in V^n$ with $\Xi_{\rho}(\vec x)\neq 0$.
Hence $\rho=\mathcal{M}(\rho)$ by the definition of $\mathcal{M}(\rho)$. Conversely, 
if $\rho=\mathcal{M}(\rho)$, it is straightforward to verify the condition for equality.

\end{proof}

\begin{thm}[Restatement of the uncertainty principle for Wigner rank]
Given an $n$-qudit state $\rho$ with odd prime $d$, we have 
    \begin{eqnarray}
      S_{\max}(\rho)+\log \chi_W(\rho)
        \geq n\log d,
    \end{eqnarray}
    and the equality holds iff $\rho$ is a pure stabilizer state.
\end{thm}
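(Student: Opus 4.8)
The plan is to mimic the proof of the uncertainty principle for Pauli rank, but working with the Wigner basis $\{T(\vec x)\}_{\vec x\in V^n}$ instead of the Pauli basis, and then to analyze the equality condition more carefully, since here I expect the characterization to be strictly smaller (pure stabilizer states only). First I would invoke Cauchy--Schwarz in the same way: since $\trace(\rho)=1$ and $\rho$ is supported on a space of dimension $\text{Rank}(\rho)$, we get $1\le \sqrt{\trace(\rho^2)}\,\sqrt{\text{Rank}(\rho)}$. Then I would use the Wigner analogue of Parseval, namely $\trace(\rho^2)=\sum_{\vec x\in V^n} W_\rho(\vec x)^2$ (this follows because $\{T(\vec x)\}$ is orthonormal for the normalized Hilbert--Schmidt inner product, so $\rho=\sum_{\vec x}W_\rho(\vec x)T(\vec x)$ gives $\trace(\rho^2)=d^n\sum_{\vec x}W_\rho(\vec x)^2\cdot\frac{1}{d^n}$; I need to track the $d^n$ factors carefully against the normalization of $T(\vec x)$). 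The key extra input, which is special to odd prime $d$, is the bound $|W_\rho(\vec x)|\le 1/d^n$ for all $\vec x$; this is a standard fact about the discrete Wigner function in odd prime dimension (the $T(\vec x)$ have eigenvalues $\pm 1$ appropriately, so $|\inner{T(\vec x)}{\rho}|\le \|T(\vec x)\|_\infty \cdot 1$ up to normalization). Combining, $\sum_{\vec x}W_\rho(\vec x)^2\le \chi_W(\rho)\cdot (1/d^n)^2\cdot(\text{something})$, and collecting the $d^n$'s yields $\trace(\rho^2)\le \chi_W(\rho)/d^n$, hence $1\le \sqrt{\chi_W(\rho)/d^n}\sqrt{\text{Rank}(\rho)}$, which on taking logs is exactly $S_{\max}(\rho)+\log\chi_W(\rho)\ge n\log d$.

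For the equality analysis I would proceed in two stages. Equality in Cauchy--Schwarz forces $\rho$ to be proportional to a projector, i.e.\ $\rho = \frac{1}{\text{Rank}(\rho)}\Pi$ for a projector $\Pi$ (a ``uniformly mixed'' state on its support). Equality in the bound $|W_\rho(\vec x)|\le 1/d^n$ on the support forces $W_\rho(\vec x)\in\{0,\pm 1/d^n\}$ everywhere. I then want to conclude $\rho$ is a pure stabilizer state. One route: a state whose Wigner function takes only values in $\{0,\pm 1/d^n\}$ and which is simultaneously a normalized projector is highly constrained; I would argue that the positivity of $\rho$ rules out the value $-1/d^n$ (or rather, combine with the Pauli-rank uncertainty principle). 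Here is the cleanest path I see: by Proposition~\ref{prop:xw}-type reasoning or directly, relate $\chi_W$ and $\chi_P$ through the symplectic Fourier transform; equality in the Wigner uncertainty principle combined with the already-established Pauli uncertainty principle $S_{\max}(\rho)+\log\chi_P(\rho)\ge n\log d$ should pin down that equality holds there too, so $\rho=\mathcal{M}(\rho)$ is a stabilizer state, and moreover a stabilizer state has $\chi_W=\chi_P=d^n$ only when it is pure (a mixed stabilizer state of rank $d^{n-r}$ has $\chi_P = d^{n+r}$... wait, one must check: actually a mixed stabilizer state has a smaller stabilizer group but $\chi_P$ equal to the order of that group, which is $d^{r}$ with $\text{Rank}=d^{n-r}$, giving $S_{\max}+\log\chi_P = (n-r)\log d + r\log d = n\log d$ for \emph{all} stabilizer states — so the Pauli principle does not separate pure from mixed). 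So the pure-only phenomenon must come genuinely from the Wigner side: I would compute $\chi_W$ for a mixed stabilizer state directly and show $S_{\max}+\log\chi_W > n\log d$ strictly unless $r=0$ (pure). Concretely, for $\rho=\frac{1}{d^{n-r}}\prod_{i=1}^r \mathbb{E}_{k_i}g_i^{k_i}$ one expands in the $T(\vec x)$ basis and counts the support of $W_\rho$; I expect it to be $d^{n+r}$ or similar, strictly larger than $d^n\cdot\text{Rank}(\rho)^{-1}\cdot d^n$... — this counting is the crux.

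The main obstacle, then, is precisely the equality characterization showing \emph{pure} stabilizer states and nothing more: the inequality itself is a routine Cauchy--Schwarz plus Parseval plus the $|W_\rho|\le 1/d^n$ bound (the last being where odd prime $d$ is essential, since in $d=2$ the Wigner function can exceed $1/d^n$ in magnitude / be negative with larger modulus). I would spend the bulk of the argument on: (i) carefully proving or citing $|W_\rho(\vec x)|\le d^{-n}$ for odd prime $d$ (eigenvalues of phase-point operators are $d$-th roots of unity scaled by $d^{-n}$, and more precisely $T(\vec x)$ has spectrum in $\{\pm d^{-n}\}$ after the right normalization, or one uses $T(\vec x)^2 = d^{-n} T(\vec x)\cdot(\text{unitary})$-type identities from Gross); and (ii) showing that among stabilizer states, equality in the Wigner bound selects exactly the pure ones, by an explicit support count of $W_\rho$ for mixed stabilizer states. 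If the direct count in (ii) is messy, the fallback is: equality forces $\rho$ proportional to a projector \emph{and} $\rho=\mathcal{M}(\rho)$ (from $\rho$ being a projector-multiple with all nonzero Wigner values of equal modulus, which should force all nonzero characteristic-function values to have modulus one as well via the symplectic Fourier transform being an $\ell^2$ isometry that also preserves the relevant extremal structure), and a stabilizer state that is proportional to a projector is pure.
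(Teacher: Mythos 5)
Your derivation of the inequality is sound and close in spirit to the paper's: the paper chains $1=\sum_{\vec x}W_{\rho}(\vec x)\leq\sum_{\vec x}|W_{\rho}(\vec x)|\leq(\sum_{\vec x}W_{\rho}(\vec x)^2)^{1/2}\sqrt{\chi_W(\rho)}$ with Parseval $\Tr{\rho^2}=d^n\sum_{\vec x}W_{\rho}(\vec x)^2$ and $\Tr{\rho^2}\leq\mathrm{Rank}(\rho)$, whereas you use $1\leq\sqrt{\Tr{\rho^2}}\sqrt{\mathrm{Rank}(\rho)}$ together with the pointwise bound $|W_{\rho}(\vec x)|\leq d^{-n}$ (which is correct for odd prime $d$, since the phase-point operators here are conjugates of the parity operator and have spectrum $\{\pm1\}$). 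Either route gives $d^n\leq\mathrm{Rank}(\rho)\,\chi_W(\rho)$.

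The genuine gap is in the equality characterization, and it is twofold. First, your conditions (normalized projector plus $W_{\rho}\in\{0,\pm d^{-n}\}$) do not by themselves yield purity; you need to bring in the normalization $\sum_{\vec x}W_{\rho}(\vec x)=1$, which with $|W_{\rho}|\equiv d^{-n}$ on a support of size $d^n/\mathrm{Rank}(\rho)$ forces $\mathrm{Rank}(\rho)=1$ and simultaneously rules out the value $-d^{-n}$. The paper builds exactly this into its chain by starting from $1=\sum_{\vec x}W_{\rho}(\vec x)\leq\sum_{\vec x}|W_{\rho}(\vec x)|$; your chain starts from $\Tr{\rho}=1$ over eigenvalues and never uses it. Second, and more seriously, even once you know $\rho$ is pure with $W_{\rho}$ taking only the values $0$ and $d^{-n}$, concluding that $\rho$ is a stabilizer state requires the discrete Hudson theorem of Gross (a pure state in odd prime dimension has nonnegative Wigner function iff it is a stabilizer state). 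This is the load-bearing ingredient of the ``only if'' direction and it appears nowhere in your proposal; none of your substitute routes closes the gap. The detour through the Pauli-rank principle cannot work, as you yourself notice, since that principle is saturated by all stabilizer states; the proposal to count $\chi_W$ for mixed stabilizer states only addresses states already known to be stabilizer, which the equality conditions do not give you; and the fallback claim that ``a stabilizer state proportional to a projector is pure'' is false, since every stabilizer state, mixed or not, is a normalized projector. Citing the discrete Hudson theorem at the point where you have purity and nonnegativity of $W_{\rho}$ is what completes the argument.
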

\begin{proof}
    The proof of the inequality is similar to the above theorem. 
     By
    the Cauchy-Schwarz inequality, we have
    \begin{align}
        1=\sum_{\vec x}W_{\rho}(\vec x)\leq \sum_{\vec x}|W_{\rho}(\vec x)|
    \leq \left(\sum_{\vec x}W_{\rho}(\vec x)^2\right)^{1/2}\sqrt{\chi_W(\rho)}
= \left(\frac{1}{d^n}\Tr{\rho^2}\right)^{1/2}\sqrt{\chi_W(\rho)}
    \leq \left(\frac{1}{d^n}\text{Rank}(\rho)\right)^{1/2}\sqrt{\chi_W(\rho)},
    \end{align}
where the equality comes from the Parseval's identity
$\Tr{\rho^2}=d^n\sum_{\vec x\in V^n}W_{\rho}(\vec x)^2$,
and the last
inequality comes from the fact that $\Tr{\rho^2}\leq \text{Rank}(\rho)$.

Hence, the equality implies that the $\Tr{\rho^2}= \text{Rank}(\rho)$, i.e., $\rho$ is a pure state, and 
$W_{\rho}$ is constant on its support, which is equal to $\frac{1}{d^n}$. Then, by the discrete Hudson theorem \cite{Gross06}, 
$\rho$ is a pure stabilizer state. 
Conversely, 
if $\rho$ is a pure stabilizer state, it is straightforward to verify the condition for equality.

\end{proof}

\begin{prop}[Restatement of Proposition~\ref{prop:xw}]
    Given an $n$-qudit state $\rho$ with odd prime $d$, we have 
    \begin{eqnarray}
        \log \chi_P(\rho)+\log \chi_W(\rho)
        \geq 2n\log d,
    \end{eqnarray}
    and the equality holds iff $\rho$ is a stabilizer state, i.e.,  $\rho=\mathcal{M}(\rho)$.
\end{prop}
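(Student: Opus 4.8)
The plan is to identify $\chi_P(\rho)=|\operatorname{supp}(\Xi_\rho)|$ and $\chi_W(\rho)=|\operatorname{supp}(W_\rho)|$ as the support sizes of a Fourier-dual pair of functions on the finite abelian group $V^n=\mathbb{Z}_d^{2n}$, and then to invoke the Donoho--Stark uncertainty principle for that group~\cite{Donoho1989uncertainty}. The key observation is that the symplectic Fourier transform relating $\Xi_\rho$ and $W_\rho$ (see~\cite{Gross06}) is, up to the overall normalization and the linear reparametrization of $V^n$ induced by the nondegenerate symplectic form, precisely the ordinary Fourier transform on $\mathbb{Z}_d^{2n}$. Since neither an overall rescaling nor a group automorphism changes the number of nonzero values of a function, $\chi_W(\rho)=|\operatorname{supp}(\widehat{\Xi_\rho})|$, where $\widehat{g}$ denotes the Fourier transform of $g$ on $\mathbb{Z}_d^{2n}$; here the assumption that $d$ is an odd prime is exactly what makes the phase-space formalism and $W_\rho$ well defined.

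For the inequality itself I would reproduce the same Cauchy--Schwarz argument used in the two preceding proofs. Set $N=d^{2n}$, $f=\Xi_\rho$, $S=\operatorname{supp}(f)$, $S'=\operatorname{supp}(\widehat f)$, and pick $x_0$ with $|f(x_0)|=\max_x|f(x)|=:M$ (note $M\ge|\Xi_\rho(\vec 0)|=1>0$, since $w(\vec 0)=I$). Fourier inversion gives $M=|f(x_0)|\le\sum_{\xi\in S'}|\widehat f(\xi)|$; by Cauchy--Schwarz and Plancherel, $\sum_{\xi\in S'}|\widehat f(\xi)|\le|S'|^{1/2}\|\widehat f\|_2=|S'|^{1/2}N^{-1/2}\|f\|_2$; and since $\|f\|_2^2=\sum_{x\in S}|f(x)|^2\le|S|\,M^2$, the right-hand side is at most $N^{-1/2}(|S|\,|S'|)^{1/2}M$. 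Dividing by $M>0$ gives $|S|\,|S'|\ge N$, i.e.\ $\log\chi_P(\rho)+\log\chi_W(\rho)\ge 2n\log d$.

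For the equality case I would examine when each inequality above is tight. Overall equality forces equality in the last step, hence $|f(x)|=M$ for every $x\in S$; since $\vec 0\in S$ we then have $M=|\Xi_\rho(\vec 0)|=1$, so $|\Xi_\rho(\vec x)|\in\{0,1\}$ for all $\vec x\in V^n$. By the defining relation of the mean state, Eq.~\eqref{0109shi6}, this yields $\Xi_{\mathcal{M}(\rho)}=\Xi_\rho$, so $\rho=\mathcal{M}(\rho)$ and $\rho$ is a stabilizer state. For the converse, if $\rho$ is a stabilizer state with $r$ generators, then $\operatorname{supp}(\Xi_\rho)$ is the stabilizer group $G_\rho$, a subgroup of $V^n$ of order $d^{r}$, so $\chi_P(\rho)=d^{r}$; on $G_\rho$ the function $\Xi_\rho$ is unimodular and multiplicative up to the Pauli phase, so its symplectic Fourier transform $W_\rho$ is a unimodular constant times the indicator of a single coset of the symplectic complement $G_\rho^{\perp}$ (the standard form of a stabilizer Wigner function, cf.~\cite{Gross06}), giving $\chi_W(\rho)=|G_\rho^{\perp}|=N/|G_\rho|=d^{2n-r}$ and hence $\chi_P(\rho)\,\chi_W(\rho)=d^{2n}$.

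The Cauchy--Schwarz chain and its equality analysis are routine. The part that needs care, and the main obstacle, is twofold: making fully precise that the symplectic Fourier transform is a scalar multiple of a unitary conjugate of the group Fourier transform on $\mathbb{Z}_d^{2n}$, so that the support identity $\chi_W(\rho)=|\operatorname{supp}(\widehat{\Xi_\rho})|$ is literally correct; and, in the converse direction, handling the Pauli cocycle carefully enough to be sure that the symplectic transform of $\Xi_\rho$ restricted to $G_\rho$ is supported on exactly one coset of $G_\rho^{\perp}$ rather than a union of several --- this is precisely where the stabilizer structure and the odd-prime hypothesis enter.
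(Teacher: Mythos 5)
Your argument is correct and is essentially the paper's proof in Donoho--Stark clothing: the paper's chain $1=\sum_{\vec x}W_\rho(\vec x)\le \sqrt{\chi_W(\rho)}\bigl(\sum_{\vec x}W_\rho(\vec x)^2\bigr)^{1/2}=\sqrt{\chi_W(\rho)}\bigl(d^{-2n}\sum_{\vec x}|\Xi_\rho(\vec x)|^2\bigr)^{1/2}\le d^{-n}\sqrt{\chi_W(\rho)\,\chi_P(\rho)}$ is exactly your Cauchy--Schwarz/Plancherel computation with the inversion formula evaluated at $\vec x_0=\vec 0$ (where $|\Xi_\rho|$ attains its maximum $1$), and the equality analysis --- $|\Xi_\rho|$ constant on its support forces $|\Xi_\rho|\in\{0,1\}$, hence $\rho=\mathcal{M}(\rho)$ by the definition of the mean state --- is identical. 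The only substantive difference is that you verify the converse explicitly (computing $\chi_P(\rho)=d^{r}$ and $\chi_W(\rho)=d^{2n-r}$ from the coset structure of the stabilizer Wigner function), whereas the paper merely asserts that direction is straightforward.
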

\begin{proof}
The proof of the inequality is similar to the above theorem.   By the Cauchy-Schwarz inequality, we have 
\begin{align}
    1=\sum_{\vec x}W_{\rho}(\vec x)
    \leq \left(\sum_{\vec x}|W_{\rho}(\vec x)|^2\right)^{1/2}\sqrt{\chi_W(\rho)}
    = \left(\frac{1}{d^{2n}}\sum_{\vec x}|\Xi_{\rho}(\vec x)|^2\right)^{1/2}\sqrt{\chi_W(\rho)}
    \leq \left(\frac{1}{d^{2n}} \chi_P(\rho)\right)^{1/2}\sqrt{\chi_W(\rho)},
\end{align}
where the equality comes from the Parseval's identity
$\Tr{\rho^2}=d^n\sum_{\vec x\in V^n}W_{\rho}(\vec x)^2=\frac{1}{d^n}\sum_{\vec x\in V^n}|\Xi_{\rho}(\vec x)|^2$,
and the last
inequality comes from the fact that $|\Xi_{\rho}(\vec x)|\leq 1$.

Hence, the equality holds implies that the $|\Xi_{\rho}(\vec x)|=1$ for any $\vec x\in V^n$ such that $\Xi_{\rho}(\vec x)\neq 0$.
Hence $\rho=\mathcal{M}(\rho)$ by the definition of $\mathcal{M}(\rho)$. Conversely, 
if $\rho=\mathcal{M}(\rho)$, it is straightforward to verify the condition for equality.

\end{proof}

\section{Extremality of stabilizer states in the information and correlation measures}
\begin{thm}[Restatement of the general result]
     Let $F:D(\otimes^m_i\mathcal{H}_i)\to\real$  be a convex function, 
    where $\otimes^m_i\mathcal{H}_i$ is an $m$-partite system and each subsystem $\mathcal{H}_i$ consists of $n_i$ qudits. If $F$ is
    invariant under local unitary,
    we have 
    \begin{eqnarray}
        F(\rho)\geq F(\mathcal{M}(\rho)),
    \end{eqnarray}
    where $\mathcal{M}(\rho)$ is the stabilizer  state with the same stabilizer group as $\rho$.
\end{thm}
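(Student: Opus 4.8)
## Proof Proposal

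The plan is to exhibit $\mathcal{M}(\rho)$ as the average of $\rho$ under conjugation by a group of \emph{local} unitaries, and then apply convexity together with local-unitary invariance of $F$. The key structural fact, available from the Preliminary section, is that the mean state is obtained by killing exactly those characteristic-function coefficients $\Xi_\rho(\vec x)$ with $|\Xi_\rho(\vec x)|<1$, while keeping those with $|\Xi_\rho(\vec x)|=1$ untouched. I want to realize this projection onto the ``full-modulus'' Pauli components as a twirl.

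First I would recall that for the phase space $V^n = \mathbb{Z}_d^n\times\mathbb{Z}_d^n$, conjugation of a Pauli operator by another Pauli operator only produces a phase: $w(\vec y)\, w(\vec x)\, w(\vec y)^\dag = \omega_d^{[\vec x,\vec y]} w(\vec x)$ for the symplectic form $[\cdot,\cdot]$ on $V^n$. Hence averaging $\rho = \frac{1}{d^n}\sum_{\vec x} \Xi_\rho(\vec x) w(\vec x)$ over conjugation by $w(\vec y)$ with $\vec y$ ranging over $V^n$ (or over a suitable subgroup) annihilates every $w(\vec x)$ that is not symplectically orthogonal to all the $\vec y$'s, i.e.\ projects onto the Pauli components indexed by the symplectic complement of the chosen subgroup. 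The subtlety is that we do not a priori get exactly the set $\{\vec x : |\Xi_\rho(\vec x)|=1\}$ as such a symplectic complement — but this is precisely where Lemma~12 of \cite{BGJ23b} enters: $G_\rho = \{w(\vec x): |\Xi_\rho(\vec x)|=1\}$ is a \emph{commuting} (isotropic) subgroup of the Pauli group, hence its index set is an isotropic subspace $L\subseteq V^n$, and one checks that $\Xi_\rho$ is supported inside $L^\perp$ (the symplectic complement), with $\mathcal{M}(\rho)$ equal to the twirl of $\rho$ over conjugation by $\{w(\vec y): \vec y\in L\}$. Thus
\begin{eqnarray*}
\mathcal{M}(\rho) = \frac{1}{|L|}\sum_{\vec y\in L} w(\vec y)\,\rho\, w(\vec y)^\dag .
\end{eqnarray*}

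Second, I must address why this twirl is by \emph{local} unitaries in the sense required by the theorem, namely unitaries respecting the $m$-partite cut $\otimes_i^m \mathcal{H}_i$ with $\mathcal{H}_i$ an $n_i$-qudit block. This holds because each $w(\vec y) = w(\vec y^{(1)})\otimes\cdots\otimes w(\vec y^{(m)})$ factorizes as a tensor product across any partition of the qudits into blocks — Pauli operators are products of single-qudit operators — so $w(\vec y)$ is automatically a local unitary with respect to every such partition. Therefore $F(w(\vec y)\rho\, w(\vec y)^\dag) = F(\rho)$ for all $\vec y$. Finally, convexity of $F$ gives
\begin{eqnarray*}
F(\mathcal{M}(\rho)) = F\!\left(\frac{1}{|L|}\sum_{\vec y\in L} w(\vec y)\rho\, w(\vec y)^\dag\right) \leq \frac{1}{|L|}\sum_{\vec y\in L} F(w(\vec y)\rho\, w(\vec y)^\dag) = F(\rho),
\end{eqnarray*}
which is the claim.

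The main obstacle I anticipate is the middle step: rigorously identifying $\mathcal{M}(\rho)$ with the Pauli twirl over $\{w(\vec y):\vec y\in L\}$, where $L$ is the isotropic subspace associated to the stabilizer group $G_\rho$. One must verify (i) that the support of $\Xi_\rho$ is contained in $L^\perp$, so that twirling over $L$ kills exactly the components with $|\Xi_\rho|<1$ and fixes those with $|\Xi_\rho|=1$; and (ii) that $G_\rho$ being a group (not merely a set) is what guarantees its index set is linear and isotropic. Claim (i) is plausible but needs a short argument — essentially that if $|\Xi_\rho(\vec x)|=1$ and $|\Xi_\rho(\vec z)|<1$ then $\vec x, \vec z$ cannot both violate symplectic orthogonality to $L$ in a way that survives — and may ultimately rest on the cited results in \cite{BGJ23a,BGJ23b} that $\mathcal{M}(\rho)$ is a stabilizer state whose stabilizer group equals $G_\rho$. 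If a direct twirl identification proves delicate, a fallback is to decompose the twirl into a sequence of single-generator averages $\rho\mapsto \mathbb{E}_{k}\, g^k \rho\, g^{-k}$ over generators $g$ of $G_\rho$'s symplectic-complement data and track the characteristic function at each step, which makes the ``kill the small coefficients, keep the modulus-one coefficients'' bookkeeping completely explicit.
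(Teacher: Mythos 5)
Your overall strategy---realize $\mathcal{M}(\rho)$ as a convex combination of Pauli conjugates of $\rho$, observe that Pauli operators factor as tensor products of single-qudit unitaries and hence are local for any partition, and finish with convexity---is exactly the paper's strategy. But the central identity you write down is false: with $L=\{\vec x: |\Xi_\rho(\vec x)|=1\}$ the index set of $G_\rho$, the twirl $\frac{1}{|L|}\sum_{\vec y\in L} w(\vec y)\rho\, w(\vec y)^\dag$ equals $\rho$, not $\mathcal{M}(\rho)$. Indeed, $|\Xi_\rho(\vec y)|=|\trace(\rho\, w(\vec y)^\dag)|=1$ forces $w(\vec y)\rho=\lambda\rho$ for a phase $\lambda$, hence $w(\vec y)\rho\, w(\vec y)^\dag=\rho$ for every $\vec y\in L$; equivalently, by your own observation that a twirl over a subgroup $S$ projects the Pauli decomposition onto $S^\perp$ and that $\mathrm{Supp}(\Xi_\rho)\subseteq L^\perp$, the twirl over $L$ leaves every surviving Fourier component untouched. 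You have the projection backwards: to keep exactly the components indexed by $L$ and kill those with $|\Xi_\rho(\vec x)|<1$, you must average over a subgroup $M$ whose symplectic complement meets $\mathrm{Supp}(\Xi_\rho)$ precisely in $L$ --- for instance $M=L^\perp$ (of dimension $2n-r$ when $L$ has rank $r$), not $M=L$. Your ``fallback'' sentence, which speaks of generators of the \emph{symplectic-complement} data, gestures at the right group, but the displayed formula and the surrounding argument use the wrong one, so as written the proof establishes only the vacuous inequality $F(\rho)\geq F(\rho)$.

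Once the twirling group is corrected, the rest goes through and coincides with the paper's proof in substance. The paper avoids handling $L^\perp$ abstractly by first conjugating with a Clifford unitary $U$ that maps the generators of $G_\rho$ to $Z_1,\dots,Z_r$; then $\mathcal{M}(U\rho U^\dag)$ is the full Pauli twirl over the last $n-r$ qudits (a group of size $d^{2(n-r)}$, which together with the support condition $\mathrm{Supp}(\Xi)\subseteq L^\perp$ isolates exactly the $Z$-type components on the first $r$ qudits), and conjugating back expresses $\mathcal{M}(\rho)$ as an average of $\rho$ over conjugation by the Pauli operators $U^\dag w(\vec 0,\vec y)U$. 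Either route then uses precisely your two closing observations (locality of Paulis across any cut, plus convexity), so the fix is localized to identifying the correct averaging group.
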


\begin{proof}
Although $\mathcal{M}$ is not a quantum channel, $\mathcal{M}(\rho)$
can be written as a convex combination of $w(\vec x)\rho w(\vec x)^\dag$, where $w(\vec x)$ depends on $\rho$. 
Let us consider the stabilizer group $G_{\rho}$ of the state $\rho$, and assume that the generators 
of $G_{\rho}$ are $\set{w(\vec x_i)}_{i\in [r]}$. 
Hence, there exists some Clifford unitary such that $Uw(\vec x_i)U^\dag=Z_i$, where
$Z_i$ is the Pauli $Z$ operator on $i$-th qudit. 
Hence, the stabilizer group of $U\rho U^\dag$ is generated by 
$\set{Z_i}_{i\in[r]}$, which leads to
\begin{eqnarray}
    \mathcal{M}(U\rho U^\dag)=\frac{1}{d^{2(n-r)}}\sum_{\vec y\in V^{n-r}}
    w(\vec 0, \vec y)U\rho U^\dag
    w(\vec 0, \vec y)^\dag.
    \end{eqnarray}
Since $\mathcal{M}$ is commuting with Clifford unitary (see Lemma 15 in \cite{BGJ23b}), 
we have 
\begin{eqnarray}
    \mathcal{M}(\rho)
    =\frac{1}{d^{2(n-r)}}\sum_{\vec y\in V^{n-r}}
    U^\dag w(\vec 0, \vec y)U\rho U^\dag
    w(\vec 0, \vec y)^\dag U
    =\frac{1}{d^{2(n-r)}}\sum_{\vec y\in V^{n-r}}
    w(\vec y')\rho w(\vec y')^\dag,
\end{eqnarray}
where $ w(\vec y')=U^\dag w(\vec 0, \vec y)U$  is a Pauli operator as Clifford unitaries will map operators to Pauli operators. 
Therefore,  
\begin{eqnarray}
    F(\mathcal{M}(\rho))
    =F\left(\frac{1}{d^{2(n-r)}}\sum_{\vec y\in V^{n-r}}
    w(\vec y')\rho w(\vec y')^\dag\right)
    \leq \frac{1}{d^{2(n-r)}}\sum_{\vec y\in V^{n-r}}F( w(\vec y')\rho w(\vec y')^\dag)
    =F(\rho),
\end{eqnarray}
where the first inequality comes from the assumption that $F$ is convex, and the last equality comes
from the assumption that $F$ is invariant under local unitaries.
    
\end{proof}

\begin{prop}[Restatement of the monotonicity of entanglement entropy]
 The entanglement entropy is monotonically increasing under quantum convolution, i.e., 
 \begin{eqnarray}
     S(A)_{\boxtimes_L\rho}
     \leq   S(A)_{\boxtimes_{L+1}\rho},
 \end{eqnarray}
 for any $L\geq 0$, where $S(A)_{\boxtimes_L\rho}=S((\boxtimes_L\rho)_A)$ is the entanglement entropy of subsystem $A$ after applying the $L$-th iteration of the quantum convolution to the state $\rho$.
\end{prop}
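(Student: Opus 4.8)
The plan is to reduce the statement to a single application of Theorem~\ref{thm:extre} (the general extremality result) combined with the quantum central limit theorem. The key observation is that one step of the convolution is itself a concatenation of a local unitary, a tensor-product embedding, and a partial trace — but more usefully, the main structural fact I would exploit is that the $L$-fold convolution $\boxtimes_L\rho$ and its ``target'' stabilizer state interact nicely. Concretely, I would first recall from \cite{BGJ23a,BGJ23b} that $\mathcal{M}(\rho\boxtimes\sigma)=\mathcal{M}(\rho)\boxtimes\mathcal{M}(\sigma)$ and that $\mathcal{M}(\boxtimes_L\rho)=\mathcal{M}(\rho)$ for every $L\ge 0$, so all the iterates share the same mean state.

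Next I would argue the single-step inequality $S(A)_{\rho}\le S(A)_{\rho\boxtimes\rho}$, from which the proposition follows by induction on $L$ (applying it to $\boxtimes_L\rho$ in place of $\rho$, and noting that $\boxtimes_{L+1}\rho=(\boxtimes_L\rho)\boxtimes\rho$ — or, if one prefers the more symmetric route, that $\boxtimes_{2L}\rho=(\boxtimes_L\rho)\boxtimes(\boxtimes_L\rho)$, which together with a subsequence/monotone-limit argument also closes the induction). To get the single-step bound, the idea is that $(\rho\boxtimes\rho)_A$ is obtained from $\rho_A\ot\rho_A$ by a local (within $A$) unitary followed by tracing out part of $A$; since the entanglement entropy $S(A)$ is the von Neumann entropy of the reduced state, and von Neumann entropy of $\rho_A$ equals $S(A)_\rho$ while $\rho_A\ot\rho_A$ has entropy $2S(A)_\rho$, one cannot directly conclude — so instead I would invoke Theorem~\ref{thm:extre} in the form that $-S(A)_{\cdot}$ is convex and locally-unitary-invariant, giving $S(A)_\tau\le S(A)_{\mathcal{M}(\tau)}$ for any $\tau$. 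Applying this to $\tau=\boxtimes_L\rho$ gives $S(A)_{\boxtimes_L\rho}\le S(A)_{\mathcal{M}(\rho)}$, a uniform-in-$L$ upper bound; the genuine monotonicity then needs the extra input that the sequence $S(A)_{\boxtimes_L\rho}$ is nondecreasing, which I would extract from a data-processing / convexity property of the convolution map itself (the convolution channel $\sigma\mapsto\rho\boxtimes\sigma$ restricted to subsystem $A$ is a legitimate quantum channel on $A$ when $\rho$ is fixed, and entropy is nondecreasing under the unital structure here because the Haar/Pauli-twirl averaging built into the CLT makes the relevant map doubly stochastic on the level of $A$-spectra).

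More cleanly, the route I would actually take is: write $(\rho\boxtimes\rho)_A = \Ptr{B}{U_{s,t}(\rho\ot\rho)U_{s,t}^\dag}_A$ and show that $(\rho\boxtimes\rho)_A$ majorizes $\rho_A$ in the sense that $\rho_A$ can be recovered from $(\rho\boxtimes\rho)_A$ by a unital channel on $A$ — equivalently, that $S(A)_\rho = S(A)_{\rho\boxtimes\text{(maximally mixed)}}\le S(A)_{\rho\boxtimes\rho}$, using that convolving with the maximally mixed state leaves $\rho_A$ unchanged (or only scrambles it by a local unitary) while convolving with $\rho$ instead is a ``more mixing'' operation. The comparison ``$\rho\boxtimes\rho$ is more mixed on $A$ than $\rho\boxtimes I/d^n$'' is exactly a statement amenable to Theorem~\ref{thm:extre}: both have the same mean state $\mathcal{M}(\rho)$ on the relevant algebra, so convexity + local-unitary invariance of $-S(A)$ sandwiches them. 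I would then write the induction: $S(A)_{\boxtimes_{L+1}\rho}=S(A)_{(\boxtimes_L\rho)\boxtimes\rho}\ge S(A)_{(\boxtimes_L\rho)\boxtimes I/d^n}=S(A)_{\boxtimes_L\rho}$, where the last equality uses that $\boxtimes$ with the maximally mixed state acts on each factor by a local unitary (hence preserves $S(A)$), and the inequality is the single-step comparison above.

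\textbf{Main obstacle.} The delicate point is the single-step comparison $S(A)_{\rho\boxtimes\rho}\ge S(A)_{\rho}$: Theorem~\ref{thm:extre} by itself only yields the \emph{uniform} bound $S(A)_{\boxtimes_L\rho}\le S(A)_{\mathcal{M}(\rho)}$, not monotonicity in $L$. Closing that gap requires either (i) a majorization/data-processing statement showing $\rho_A$ is a unital-channel image of $(\rho\boxtimes\rho)_A$, or (ii) recognizing that convolution-by-$\rho$ on the $A$ subsystem factors as convolution-by-$I/d^n$ followed by a further mixing step and invoking Theorem~\ref{thm:extre} on that intermediate state. I expect step (i)/(ii) — pinning down precisely which unital channel / which intermediate state makes the convexity argument go through on the $A$-marginal, rather than on the full $2n$-qudit state — to be the crux; the induction and the reduction to a single step are routine once that lemma is in hand.
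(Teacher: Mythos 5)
There is a genuine gap, and it is exactly the one you flag as your ``main obstacle.'' You actually have the decisive structural fact in hand when you observe that $(\rho\boxtimes\rho)_A$ is obtained from $\rho_A\ot\rho_A$ by a local (within-$A$) unitary followed by a partial trace: since $U_{s,t}$ factors as a tensor product over qudit pairs and the whole second copy is traced out, this observation upgrades to the identity $\Ptr{A^c}{\rho\boxtimes\sigma}=\rho_A\boxtimes\sigma_A$, hence $(\boxtimes_L\rho)_A=\boxtimes_L(\rho_A)$. At that point the proposition is immediate from the \emph{already established} monotonicity of the von Neumann entropy under convolution, $S(\tau)\le S(\tau\boxtimes\tau)$ from \cite{BGJ23a,BGJ23b} (quoted in Table~\ref{tab:sum_C}), applied to $\tau=\rho_A$ — no appeal to Theorem~\ref{thm:extre}, no majorization lemma, and no new single-step comparison is needed. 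This is the paper's two-line proof. You instead dismiss this route (``one cannot directly conclude'') and go looking for a different mechanism.

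The alternative route you then propose does not work, because it rests on a false lemma: convolving with the maximally mixed state does \emph{not} leave the $A$-marginal unchanged. From the factorization of characteristic functions, $\Xi_{\rho\boxtimes\sigma}(\vec x)=\Xi_\rho(s\vec x)\,\Xi_\sigma(t\vec x)$ and $\Xi_{I/d^n}(\vec x)=\delta_{\vec x,\vec 0}$, so for nontrivial $t$ one gets $\rho\boxtimes (I/d^n)=I/d^n$; thus $S(A)_{\rho\boxtimes I/d^n}=n_A\log d$, not $S(A)_\rho$, and the chain $S(A)_{(\boxtimes_L\rho)\boxtimes\rho}\ge S(A)_{(\boxtimes_L\rho)\boxtimes I/d^n}=S(A)_{\boxtimes_L\rho}$ collapses (its last equality is wrong and its first inequality would force every convolution to be maximally mixed on $A$). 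Your correct observations — that Theorem~\ref{thm:extre} only gives the uniform bound $S(A)_{\boxtimes_L\rho}\le S(A)_{\mathcal{M}(\rho)}$, and that the reduction to a single step plus induction is routine — are fine, but the crux you identify as open is resolved by commuting the partial trace past the convolution and citing the known entropy monotonicity, not by constructing a unital channel or an intermediate state.
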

\begin{proof}
    Because $\Ptr{A^c}{\boxtimes_L\rho}=\boxtimes_L\Ptr{A^c}{\rho}$ where $A^c$ is the complement of $A$, we have 
    $S(A)_{\boxtimes_L\rho}=S(\boxtimes_L\rho_A)$. Then we get the result by the monotonicity of quantum entropy under convolution.
\end{proof}

\begin{prop}[Restatement of the monotonicity of conditional entropy]
 The conditional entropy $S_{\alpha}(A|B)$ is monotonically increasing under convolution for any $\alpha\geq 1/2$, i.e., 
 \begin{eqnarray}
     S_{\alpha}(A|B)_{\boxtimes_L\rho}
     \leq   S_{\alpha}(A|B)_{\boxtimes_{L+1}\rho},
 \end{eqnarray}
 for any $L\geq 0$.
\end{prop}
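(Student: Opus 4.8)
The plan is to reduce the statement for $S_\alpha(A|B)$ to the already-established monotonicity facts, in much the same spirit as the proof of Proposition~\ref{prop:EEM}. The key structural input is that the quantum convolution $\boxtimes_{s,t}$ acts independently on the two parties: if $\rho_{AB}$ is a state on a bipartite system where $A$ consists of $n_A$ qudits and $B$ of $n_B$ qudits, then since $U_{s,t}$ is built tensor-factor-wise on the $2n$ qudits, we may group the qudits of the first copy's $A$-part with the second copy's $A$-part and likewise for $B$. Concretely, writing the convolution of $\rho_{AB}$ and $\sigma_{AB}$ and tracing out one full copy, one checks $(\rho_{AB}\boxtimes\sigma_{AB})$ has the property that it is obtained by applying $U_{s,t}^{(A)}\otimes U_{s,t}^{(B)}$ (the convolution unitaries restricted to the $A$-qudits and $B$-qudits respectively) to $\rho_{AB}\otimes\sigma_{AB}$ and then tracing out the second copy of both $A$ and $B$. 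I would first state and prove this ``tensor compatibility'' of $\boxtimes$ with a bipartition as a lemma.

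Next I would invoke the variational definition $S_\alpha(A|B)_\rho = -\inf_{\sigma_B} D_\alpha(\rho_{AB}\,\|\,I_A\otimes\sigma_B)$. The strategy is a data-processing argument: the map $\rho_{AB}\mapsto (\boxtimes_{L+1}\rho)$ factors through $(\boxtimes_L\rho)\otimes\rho$ followed by the unitary $U_{s,t}^{(A)}\otimes U_{s,t}^{(B)}$ and a partial trace over the second copy of $AB$. All three of these operations are CPTP maps that respect the $A|B$ bipartition, so by the data-processing inequality for $D_\alpha$ (valid for $\alpha\ge 1/2$), for any fixed $\sigma_B$ we have $D_\alpha\big((\boxtimes_{L+1}\rho)_{AB}\,\big\|\,I_A\otimes\sigma_B'\big)\le D_\alpha\big((\boxtimes_L\rho)_{AB}\,\big\|\,I_A\otimes\sigma_B\big)$ for a suitably chosen $\sigma_B'$ (namely the image of $I_A\otimes\sigma_B\otimes(\text{something on the second copy})$ under the same channel, which retains product form $I_A\otimes\sigma_B'$ because the channel on the $A$-side is unital and independent of the $B$-side). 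Taking infima over $\sigma_B$ on the right and then over $\sigma_B'$ on the left gives $S_\alpha(A|B)_{\boxtimes_L\rho}\le S_\alpha(A|B)_{\boxtimes_{L+1}\rho}$, and iterating (or rather, this already is the single-step inequality) yields the claim for all $L\ge 0$.

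An alternative, perhaps cleaner route that avoids tracking $\sigma_B'$ explicitly: appeal directly to Theorem~\ref{thm:extre}-style reasoning combined with the central limit theorem is \emph{not} quite enough here because we need the monotone \emph{increase along the whole sequence}, not just the endpoint bound. So I would instead lean on the additivity/behavior of $S_\alpha(A|B)$ under tensor products, $S_\alpha(A|B)_{\rho\otimes\rho} = 2\,S_\alpha(A|B)_\rho$, together with the fact that $U_{s,t}^{(A)}\otimes U_{s,t}^{(B)}$ is a local unitary with respect to the $A|B$ cut (hence leaves $S_\alpha(A|B)$ invariant), and finally that discarding a subsystem on the $A$-side can only increase $S_\alpha(A|\cdot)$ while discarding on the $B$-side can only decrease the conditioning information --- here one must be careful about which subsystem is traced out. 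The correct statement is that after the unitary, the state on ($A_1A_2$)$|$($B_1B_2$) has conditional entropy $2S_\alpha(A|B)_{\boxtimes_L\rho}$ wait --- more precisely $S_\alpha(A|B)_{\boxtimes_L\rho}+S_\alpha(A|B)_\rho$, and tracing out $A_2B_2$ gives $(\boxtimes_{L+1}\rho)$, with the conditional entropy not decreasing under this partial trace by the appropriate monotonicity of $S_\alpha(A|B)$ under discarding part of $A$ \emph{and} part of $B$ jointly (which holds for $\alpha\ge 1/2$; this is where I would cite \cite{Tomamichel2015quantum1}). Combined with $S_\alpha(A|B)_\rho \le S_\alpha(A|B)_{\boxtimes_L\rho}$ being \emph{not} assumed, I would instead just use the base case $L=0$ separately or subsume it.

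The main obstacle I anticipate is the bookkeeping in the partial-trace step: one must verify that tracing out the ``second copy'' subsystems $A_2$ and $B_2$ from the post-unitary state does not decrease $S_\alpha(A_1A_2|B_1B_2)$ down to $S_\alpha(A_1|B_1)$ --- i.e., that the relevant monotonicity of the $\alpha$-conditional entropy under discarding subsystems has the right sign on both the conditioned ($A$) and conditioning ($B$) sides simultaneously. Discarding part of $A$ increases $S_\alpha(A|B)$, but discarding part of $B$ \emph{decreases} it; so naively the combined operation has indeterminate sign. The resolution is that the correct single-step comparison is between $\boxtimes_{L+1}\rho$ and $\boxtimes_L\rho$ (not $\rho$), and one routes through the data-processing inequality at the level of $D_\alpha$ with the carefully chosen product marginal $I_A\otimes\sigma_B'$, as in the first approach above; this sidesteps the sign ambiguity because data processing for $D_\alpha$ is a clean one-directional statement. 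I would therefore present the first approach as the main proof and relegate the additivity heuristic to a remark.
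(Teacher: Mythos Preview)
Your first approach is correct and is exactly the paper's proof: apply data processing for $D_\alpha$ (valid for $\alpha\ge 1/2$) to the channel $\mathcal{N}(\cdot)=(\cdot)\boxtimes\rho_{AB}$, use that this channel sends $I_A\otimes\sigma_B$ to $I_A\otimes(\sigma_B\boxtimes\rho_B)$ (the paper records this as the one-line equality $\bigl(\tfrac{I_A}{d_A}\otimes\rho_B\bigr)\boxtimes\rho_{AB}=\tfrac{I_A}{d_A}\otimes(\rho_B\boxtimes\rho_B)$, which is precisely your tensor-compatibility observation together with unitality on the $A$ side), and then pass to the infimum over $\sigma_B$. You are, if anything, more explicit than the paper about tracking the infimum and about the factorization $U_{s,t}=U_{s,t}^{(A)}\otimes U_{s,t}^{(B)}$; your second, additivity-based route is, as you correctly diagnose, obstructed by the wrong sign when discarding part of $B$, and the paper does not attempt it.
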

\begin{proof}
First, by the data processing inequality for $\alpha\geq 1/2$, we have
\begin{eqnarray}
    D_{\alpha}\left(\rho_{AB}||\frac{I_A}{d_A}\ot \rho_B\right)
    \geq  D_{\alpha}\left(\rho_{AB}\boxtimes\rho_{AB}||\left(\frac{I_A}{d_A}\ot \rho_B\right)\boxtimes\rho_{AB}\right)
    = D_{\alpha}\left(\rho_{AB}\boxtimes\rho_{AB}||\frac{I_A}{d_A}\ot (\rho_B\boxtimes\rho_B)\right).
\end{eqnarray}
Hence, by the definition  of $S_{\alpha}(A|B)$, we have $S_{\alpha}(A|B)_{\rho}\leq S(A|B)_{\rho\boxtimes\rho}$. 
Repeat the above process, we can get $ S_{\alpha}(A|B)_{\boxtimes_L\rho}
     \leq   S_{\alpha}(A|B)_{\boxtimes_{L+1}\rho}$ for any $L$.
\end{proof}

\end{document}